\newtheorem{theorem}{Theorem}
\newtheorem{claim}[theorem]{Claim}
\newtheorem{definition}[theorem]{Definition}
\newtheorem{lemma}[theorem]{Lemma}
\newenvironment{proof}[1][Proof]{\noindent\textbf{#1.} }{\ \rule{0.5em}{0.5em}}
\begin{document}

\title{The Query Complexity of Correlated Equilibria\thanks{%
Dedicated to the memory of Lloyd S. Shapley: a giant in the field, a
pioneering and inspiring figure, a supportive teacher and mentor, and a
friend. The combination of game theory with operations research,
combinatorics, probability, and computer science---all present in this
paper---has been a cornerstone of Lloyd Shapley's work. Interestingly, the
edge iso-perimetric inequality (Hart 1976) that we use here came about in
order to solve a problem posed by Lloyd in 1974 in connection with the
Banzhaf value.}~\thanks{%
This version: December 2016. Previous versions: May 2013, September 2013
(Center for Rationality DP-647). Part of this research was carried out at
Microsoft Research, Silicon Valley. We thank Parikshit Gopalan for helpful
discussions leading to the proof of Theorem B, Yakov Babichenko, Kevin
Leyton-Brown, Christos Papadimitriou, Tim Roughgarden, Eva Tardos, and Ricky
Vohra for useful discussions, and the referees and editor for their careful
reading and comments.}}
\author{Sergiu Hart\thanks{%
Institute of Mathematics, Department of Economics, and Center for the Study
of Rationality, Hebrew University of Jerusalem. Research partially supported
by Advanced Investigator Grant 249159 of the European Research Council
(ERC). \emph{e-mail}: \texttt{hart@huji.ac.il} \qquad \emph{web page}: 
\texttt{http://www.ma.huji.ac.il/hart}} \and Noam Nisan\thanks{%
School of Computer Science and Engneering, and Center for the Study of
Rationality, Hebrew University of Jerusalem. Part of this work was supported
by ISF grants 230/10 and 1435/14 of the Israeli Academy of Sciences.} }
\maketitle

\begin{abstract}
We consider the complexity of finding a \emph{correlated equilibrium} of an $%
n$-player game in a model that allows the algorithm to make queries on
players' payoffs at pure strategy profiles. Randomized regret-based dynamics
are known to yield an approximate correlated equilibrium efficiently,
namely, in time that is polynomial in the number of players $n$. Here we
show that\emph{\ \textbf{both} randomization }and\emph{\ approximation} are
necessary: no efficient deterministic algorithm can reach even an
approximate correlated equilibrium, and no efficient randomized algorithm
can reach an exact correlated equilibrium. The results are obtained by
bounding from below the number of payoff queries that are needed.

\emph{JEL Classification codes}: C8, C7
\end{abstract}

\def\@biblabel#1{#1\hfill}
\def\thebibliography#1{\section*{References}
\addcontentsline{toc}{section}{References}
\list
{}{
\labelwidth 0pt
\leftmargin 1.8em
\itemindent -1.8em
\usecounter{enumi}}
\def\newblock{\hskip .11em plus .33em minus .07em}
\sloppy\clubpenalty4000\widowpenalty4000
\sfcode`\.=1000\relax\def\baselinestretch{1}\large \normalsize}
\let\endthebibliography=\endlist%

\section{Introduction}

The computational complexity of various notions of equilibrium in games is
of interest in many different models of computation. In the present paper we
focus on the important concept of \emph{correlated equilibrium}, introduced
by Aumann (1974). Perhaps the most striking positive result in this vein is
the surprising power of \textquotedblleft regret-based" algorithms in
finding (approximate) correlated equilibria. These algorithms are obtained
from a large family of natural dynamics that converge to correlated
equilibria in any game, by emulating the computations carried out by the
players in the game; for these dynamics, see, e.g., Foster and Vohra (1997),
Hart and Mas-Colell (2000, 2001), Blum and Mansour (2007), and also the
books of Cesa-Bianchi and Lugosi (2006) and Hart and Mas-Colell (2013). Some
of these algorithms have been shown to converge efficiently (i.e.,
\textquotedblleft quickly"; see Cesa-Bianchi and Lugosi 2003, 2006 and the
Remark at the end of the Introduction). This is in contrast to the fact that
there is no natural dynamic converging to Nash equilibria (Hart and
Mas-Colell 2003, 2006; 2013) and, in fact, no efficient algorithm is
believed to exist either (see Nisan et al. 2007 [Chapter 2]).

Looking at these dynamics from a strictly computational point of view, they
yield algorithms that take the payoff (or utility) functions $%
(u_{1},...,u_{n})$ of $n$ players as input, 
and produce an (approximate) correlated equilibrium 
as output. Assuming that each player has $m$ pure strategies, 
the input size is $n\cdot m^{n}$, and yet these algorithms run in time that
is polynomial in $n$ and\footnote{%
As well as in $1/\varepsilon ,$ the inverse of the approximation parameter.} 
$m$---which is sub-linear (even poly-logarithmic) in the input size. Even
though the output size $m^{n}$ (a probability for each $n$-tuple of
strategies) is similar to the input size, these algorithms nevertheless
produce an output whose support is also small: polynomial in $n$ and $m$.
These regret-based algorithms need only \textquotedblleft black
box\textquotedblright\ access to the payoff functions, namely, the
possibility of making a sequence of queries $u_{i}(s_{1},...,s_{n})$ for
pure strategy profiles $(s_{1},...,s_{n})$.

However, the regret-based algorithms have two undesirable aspects: first,
they are \emph{randomized}, and second, they produce only an \emph{%
approximate} equilibrium.\footnote{%
It will follow from our results below that running these algorithms with
high enough precision---so that rounding up an approximate equilibrium
yields an exact equilibrium---would not help, as an exponentially high
precision would be required.} The present paper asks whether these
shortcomings can be fixed. While for many problems with sub-linear
algorithms it is clear that both randomization and approximation are
required, this is not the case here. Usually, the necessity of randomized
approximation is already implied by the verification of the result itself,
also known as \textquotedblleft certificate complexity\textquotedblright\ or
\textquotedblleft non-deterministic complexity.\textquotedblright \footnote{%
Let us look, as an example, at the prototypical sub-linear algorithm of
statistical sampling. For input $x\in \{0,1\}^{n}$, the task is to
compute---perhaps approximately---the fraction of $1$'s in the input, i.e., $%
\sum_{i}x_{i}/n$. A randomized approximation algorithm samples $%
O(\varepsilon ^{-2})$ entries and gets, with high probability, an $%
\varepsilon $-approximation. In this case it is easy to see that
randomization and approximation are both crucial since, even if the answer
were given to the algorithm---say, exactly half of the input bits are $1$%
---verifying that this is so would require querying essentially all inputs,
whether this be done deterministically (even approximately) or exactly (even
randomly).} However, correlated equilibria can be verified in time that is
polynomial in the size of their support, which itself can be polynomial in $%
n $ and $m$, and so the certificate (non-deterministic) complexity of
correlated equilibrium is small.\footnote{%
To verify that a distribution with support of size $k$ is a correlated
equilibrium calls for checking at most $n\cdot m\cdot (m-1)$ linear
inequalities, and these require at most $n\cdot m\cdot k$ payoff queries.}

Another indication that exact or deterministic algorithms may be possible
comes from the known linear programming (LP) based algorithms for correlated
equilibria (Papadimitriou and Roughgarden 2008, Jiang and Leyton-Brown 2011)
that produce, in time that is polynomial in $m$ and $n$, a correlated
equilibrium \emph{exactly and deterministically}. However, this is obtained
in the stronger model where the algorithm may query the payoff black boxes
also at profiles of \emph{mixed} strategies.\footnote{%
For example, the single mixed query of what is player $1$'s expected payoff
when each player plays the uniform mixed strategy (i.e., each pure strategy
has the same probability of $1/m)$ requires $m^{n}$ pure queries.} 
We note that an even stronger query model appears in the \textquotedblleft
communication complexity" setup, where any function of the payoff matrices
may be queried, e.g., Hart and Mansour (2010).

In recent work, Babichenko and Barman (2013) showed that every \emph{%
deterministic} algorithm that finds an \emph{exact} correlated equilibrium
requires a number of pure queries that is exponential in the number of
players $n$. This raises the question of whether the success of the
regret-based algorithms is due to the power of randomization (which is known
to be critical for achieving low regret), or to the relaxation that allows
approximate equilibria rather than exact ones.

In the present paper we show that it is actually \emph{both} of these, even
when we limit ourselves to bi-strategy games\footnote{%
For bi-strategy games \emph{coarse correlated equilibria} (also known as
\textquotedblleft Hannan equilibria") are equivalent to correlated
equilibria, and so the lower bounds obtained here apply also to the easier
problem of finding coarse correlated equilibria.} (i.e., $m=2$). The
following lower bounds apply in the query model\ (or \textquotedblleft
decision tree model\textquotedblright ), which makes no assumptions on the
algorithm (in particular not restricting its computational power) beyond the
fact that it has black-box\ access to the payoff functions, which provide
the values $u_{i}(s_{1},...,s_{n})$ for an adaptively chosen sequence of
pure strategy profiles\footnote{%
Since we are proving lower bounds, the strong (perhaps unrealistic) model
only strengthens our results. We should note that all the regret-based
algorithms mentioned above are actually effective and not only do they make
few queries but also the rest of the computation requires only polynomial
time.} $(s_{1},...,s_{n})$ and players $i.$ In particular, this model
abstracts away from issues having to do with the way that the input is
accessed.\footnote{%
Input queries presume \textquotedblleft random access memory"; without it,
all algorithms become at least linear in the input size, and thus
exponential in $n.$}

Our results are:\footnote{%
We use the standard notations: $f(n)=O(g(n))$ when there is $c>0$ such that $%
f(n)\leq cg(n)$ for all $n,$ and $f(n)=\Omega (g(n))$ when there is $c>0$
such that $f(n)\geq cg(n)$ for all $n.$ Also, $f(n)=o(g(n))$ if $%
f(n)/g(n)\rightarrow 0$ as $n\rightarrow \infty .$}

\bigskip

\noindent \textbf{Theorem A.} \emph{Every \textbf{deterministic} algorithm
that finds a }$1/2$\emph{-approximate correlated equilibrium in every }$n$%
\emph{-person} \emph{bi-strategy game with payoffs in }$\{0,1\}$\emph{\
requires }$2^{\Omega (n)}$\emph{\ queries in the worst case.}

\bigskip

\noindent (Of course, this holds \emph{a fortiori} for an $\varepsilon $%
-approximate correlated equilibrium for any $0<\varepsilon \leq 1/2.)$

\bigskip

\noindent \textbf{Theorem B.} \emph{Every algorithm (randomized or
deterministic) that finds an \textbf{exact} correlated equilibrium in every }%
$n$\emph{-person bi-strategy game with payoffs specified as }$b$\emph{-bit
integers with }$b=\Omega (n)$\emph{\ incurs an }$2^{\Omega (n)}$\emph{\
expected cost in the worst case.}

\bigskip

In Theorem B the \textquotedblleft cost" includes the number of queries
together with the size of the support of the output produced; see Section 2
(d) for a discussion of this issue. We do not know whether the result
continues to hold for smaller payoffs, such as payoffs with $b=o(n)$ bits,
or payoffs in $\{0,1\}$ (i.e., $b=1);$ see Section 2 (e).

\bigskip

The following table summarizes the results on the number of queries---we
refer to these as \textquotedblleft \emph{query complexity bounds}%
\textquotedblright :\footnote{%
\textquotedblleft Approximate Corr Eq" stands for $\varepsilon $-approximate
correlated equilibrium for fixed (small enough) $\varepsilon >0;$ the number
of strategies $m$ is also fixed $(m=2);$ and the \textquotedblleft
regret-based" algorithms are discussed in the Remark immediately below.}%
\[
\renewcommand{\arraystretch}{1.4}%
\begin{tabular}{c||c|c||}
\cline{2-3}
& \multicolumn{2}{c||}{\textbf{Algorithm}} \\ 
& \textbf{Randomized} & \textbf{Deterministic} \\ \hline\hline
\multicolumn{1}{||c||}{\textbf{Approximate Corr Eq}} & 
\begin{tabular}{c}
$O\left( n\log n\right) $ \\ 
(regret-based)%
\end{tabular}
& 
\begin{tabular}{c}
$2^{%
\displaystyle%
\Omega (n)}$ \\ 
(Theorem A)%
\end{tabular}
\\ \hline
\multicolumn{1}{||c||}{\textbf{Exact Corr Eq}} & 
\begin{tabular}{c}
$2^{%
\displaystyle%
\Omega (n)}$ \\ 
(Theorem B)%
\end{tabular}
& 
\begin{tabular}{c}
$2^{%
\displaystyle%
\Omega (n)}$ \\ 
(B\&B 2013)%
\end{tabular}
\\ \hline\hline
\end{tabular}%
\]

\bigskip

\noindent \textbf{Remark: Regret-based randomized algorithms for computing
approximate correlated equilibria. }Assume that each player has at most $m$
pure strategies. In Cesa-Bianchi and Lugosi (2003; 2006, Remark 7.6 in
Section 7.4) it is shown that by running a regret-based procedure one finds,
with probability at least $1/2,$ an $\varepsilon $-correlated equilibrium in
at most $T=16\ln (2nm)/\varepsilon ^{2}$ steps. Because each player makes no
more than $m$ payoff queries at each step, the total number of queries is $%
\leq nmT.$ After these $T$ steps one checks whether the regrets are all $%
\leq \varepsilon $ (no further queries are needed here, as the regrets are
computed all along) and, if they are not, one starts the procedure afresh.
Because the probability of success is at least $1/2,$ the expected number of
repetitions is $2,$ and so the expected total number of queries is a most $%
2nmT=32nm\ln (2nm)/\varepsilon ^{2},$ which is $O(n\log n)$ for fixed $m$
and $\varepsilon >0.$

\section{Extensions, Variations, and Open Problems}

In this section we discuss a number of relevant issues and open problems.

\bigskip

\noindent \textbf{(a) Query complexity of linear programming}

Since computing a correlated equilibrium (CE)\ in an $n$-person bi-strategy
game is a linear programming (LP) problem with $N=2^{n}$ nonnegative
unknowns (the probabilities of the $2^{n}$ strategy profiles) and $2n+1$
linear constraints (two inequalities per player; in addition, the
probabilities sum up to $1),$ it is appropriate to ask what is the query
complexity of general LP problems of this size. Here, one queries the
coefficients appearing in the various constraints.

Consider a linear programming problem with $N$ nonnegative unknowns and just 
$2$ constraints:%
\begin{equation}
\sum_{j=1}^{N}a_{j}x_{j}\geq 0,\;\;\sum_{j=1}^{N}x_{j}=1,\text{~\ ~}x\geq 0.
\label{eq:LP}
\end{equation}

\begin{claim}
Finding a $1/2$-approximate solution to problem (\ref{eq:LP}) requires $%
\Omega (N)$ queries on the coefficients $(a_{j})_{j=1,...,N}$ in the worst
case.
\end{claim}

\begin{proof}
For every $k=1,...,N$ let (\ref{eq:LP})$_{k}$ be the instance of problem (%
\ref{eq:LP}) with $a_{k}=1$ and $a_{j}=-3$ for all $j\neq k.$ For every $%
\varepsilon \geq 0,$ all the $\varepsilon $-approximate solutions of (\ref%
{eq:LP})$_{k}$ (where the inequality is relaxed to $\sum_{j=1}^{N}a_{j}x_{j}%
\geq -\varepsilon )$ satisfy $x_{k}\geq (3-\varepsilon )/4$ and $\sum_{j\neq
k}x_{j}\leq (1+\varepsilon )/4,$ and so for $\varepsilon \leq 1/2$ we get $%
x_{k}\geq 5/8>3/8\geq x_{j}$ for all $j\neq k.$ Therefore the algorithm must
find $k$ in $\{1,...,N\},$ which requires $\Omega (N)$ queries (whether
deterministic or randomized).
\end{proof}

\bigskip

The query complexity of an LP of size comparable to the correlated
equilibrium LP is thus $\Omega (N)=\Omega (2^{n}),$ i.e., exponential in $n.$
This immediately implies that regret-based algorithms \emph{cannot} be
efficiently translated to general LP problems.

\bigskip

\noindent \textbf{(b) Correlated equilibrium as special linear programming}

As seen in (a) above, the correlated equilibrium LP must have a special
structure that distinguishes it from general LP problems of similar size.
What is that structure, and how does it help to get the fast (i.e.,
polynomial in $n)$ convergence of randomized algorithms to approximate
correlated equilibria?

One feature is that the dual LP of the correlated equilibrium LP decomposes
into $n$ separate LP problems, each one of size $m\times m$ (where $m$ is
the number of strategies of each player). This \textquotedblleft dual
separability" feature lies at the basis of the existence proof of Hart and
Schmeidler (1989), is used in the algorithm of Papadimitriou and Roughgarden
(2008) and Jiang and Leyton-Brown (2011), and translates to
\textquotedblleft uncoupledness" in the world of game dynamics (cf. Hart and
Mas-Colell 2003, 2006; 2013). While this feature distinguishes the
correlated equilibrium LP\ from other LP problems, it does not explain why
it helps in \emph{only }one of the four cases (see the table at the end of
the Introduction). We thus have:

\bigskip

\textbf{Open Problem 1.} Why does the special structure of the correlated
equilibrium\ LP help \emph{only} for randomized algorithms yielding
approximate solutions (where the query complexity is polynomial rather than
exponential in $n$), and not in any of the other cases (where the query
complexity is exponential in $n)$?

\bigskip

\noindent \textbf{(c) Support size of approximate correlated equilibria}

What is the minimal support size that guarantees existence of an $%
\varepsilon $-approximate correlated equilibrium in every $n$-person
bi-strategy game? An $\varepsilon $-correlated equilibrium is just an $%
\varepsilon $-optimal strategy in a two-person zero-sum game where the
opponent has $2n$ strategies (that correspond to the correlated equilibrium
inequalities; this zero-sum game is the \textquotedblleft auxilliary game"
of Hart and Schmeidler 1989). It follows, by using the result of Lipton and
Young (1994), that there always exist $\varepsilon $-correlated equilibria
with uniform support of size\footnote{%
\textquotedblleft Uniform support of size $k$" means that the support
consists of $k$ strategy profiles, not necessarily distinct, each one with
weight $1/k$ (alternatively, the probability weights are all integer
multiples of $1/k).$} $k=\log n/(2\varepsilon ^{2})$.

As for dynamics, it has been shown (Cesa-Bianchi and Lugosi 2003, 2006; see
the Remark at the end of the Introduction) that there are regret-based
procedures that in $T=16\ln n(4n)/\varepsilon ^{2}$ steps reach an $%
\varepsilon $-correlated equilibrium with probability at least $1/2$;
moreover, the resulting $\varepsilon $-correlated equilibrium has uniform
support of size $T$ (it gives equal weight of $1/T$ to each one of the $n$%
-tuples of strategies played in the first $T$ periods). The fact that $T$ is
no more than a constant multiple of $k$ is remarkable, as it implies that
regret-based algorithms yield $\varepsilon $-correlated equilibria with
support that is essentially minimal, and so they converge as fast as
theoretically possible (up to a constant factor). In terms of queries, this
translates to an upper bound of $O(n\log n)$ queries (because each period
every one of the $n$ players makes $2$ queries; cf. Goldberg and Roth 2014).
For improved bounds on the support size, see Babichenko, Barman and Peretz
(2014).

\bigskip

\noindent \textbf{(d) Cost does not include the size of the support of the
output}

As we have seen in the Introduction, if the output of the algorithm is a
distribution that has small support (i.e., polynomial in $n),$ then
computing payoffs and verifying the correlated equilibrium inequalities
requires only polynomially many queries. By contrast, if the support is
large (i.e., exponential in $n)$ these computations require exponentially
many queries (and that is so even if the representation is succinct, e.g.,
the product of uniform mixed strategies). Therefore, our model counts the
size of the support as part of the cost. Interestingly, for Theorem A it
turns out that this issue does not matter (we show this in Section 3.3). But
it may well matter for Theorem B. When the output's support size is \emph{not%
} counted, our proofs show that the number of queries is $2^{\Omega (n)}$ in
the worst case for randomized algorithms that yield correct answers with
probability one (see footnote \ref{ft:2} in Section \ref{s:randomized}).
However, we do not know whether this is so also for randomized algorithms
that are required to yield correct answers with high probability, and
possibly incorrect answers otherwise (we conjecture that it is).

\bigskip

\textbf{Open Problem 2.} Does the result of Theorem B hold also for
randomized algorithms that yield correct answers with high probability
(rather than with probability one) and for which the size of the support of
the output is not counted?

\bigskip

\noindent \textbf{(e) Exact correlated equilibria for games with small
payoffs}

Our proof of Theorem B uses, for the worst case, games whose payoffs range
up to $2^{\Omega (n)};$ we do not know whether this requirement is needed,
and so we have:

\bigskip

\textbf{Open Problem 3.} Does the result of Theorem B hold also for payoffs
with $b=o(n)$ bits, and even for payoffs in $\{0,1\}$ (i.e., $b=1$)?

\bigskip

\noindent \textbf{(f) Query complexity of Nash equilibria}

The lower bounds of Theorems A and B apply also to the harder problem of
finding a Nash equilibrium (since every Nash equilibrium is also a
correlated equilibrium), but it is not difficult to see that, in contrast to
the correlated case, for Nash equilibria these bounds are \textquotedblleft
trivial\textquotedblright\ as they apply also to the verification complexity.%
\footnote{%
Consider $n/2$ pairs of players, where each pair of players is playing their
own matching pennies game. Clearly the unique Nash equilibrium has every
player uniformly randomizing between his two strategies. However, verifying
this equilibrium---even allowing randomized verification---requires looking
at essentially all $2^{n}$ strategy profiles since if an adversary had
changed the utility of a player in any single profile, this would no longer
be an equilibrium. For deterministic verification the adversary could change
the utility at all non-queried profiles so that this would no longer be even
an approximate equilibrium.} However, if we allow both randomization and
approximation then the verification complexity of Nash equilibrium becomes
polynomial in $n$ and $m$ (since we can verify that each player $i$ is
approximately best-responding by sampling from the distributions of the
other players). In earlier versions of the present paper we raised the
following question:

\bigskip

\textbf{Problem 4.} Fix $\varepsilon >0;$ does there exist a randomized
algorithm, with only black-box access to the players' payoff functions, that
finds an $\varepsilon $-Nash equilibrium for every $n$-player $m$-strategy
game whose running time is polynomial in $n$ and $m$?

\bigskip

This problem was recently solved by Babichenko (2014) and Chen, Cheng, and
Tang (2015), who proved that the answer is negative: the query complexity of
approximate Nash equilibria is exponential.

\section{Model and Preliminaries}

\subsection{Correlated Equilibrium}

The setup and notations are standard:

\begin{itemize}
\item \textbf{Game:} We will consider games between $n$ players, where each
player's pure strategy set is $\{0,1\}$. Our players' payoffs are normalized
between $0$ and $1$, and so a game is given by $n$ payoff functions $%
u_{1},...,u_{n},$ where for each $i$ we have $u_{i}:\{0,1\}^{n}\rightarrow
\lbrack 0,1]$.

\item \textbf{Notation:} For a (pure) strategy profile $v\in \{0,1\}^{n}$,
we use two notations: $v^{(i)}\in \{0,1\}^{n}$ denotes the result of
flipping the $i$'th bit of $v$ (i.e., where player $i$ plays $1-v_{i}$ and
all other $j$ play $v_{j}$); and $v^{i\rightarrow b}\in \{0,1\}^{n}$ denotes 
$v$ with the $i$'th bit set to $b$ (i.e., if $v_{i}=b$ then $v^{i\rightarrow
b}=v$, and otherwise $v^{i\rightarrow b}=v^{(i)}$.)

\item \textbf{Regret:} Let $x$ be a probability distribution over the set of
(pure) strategy profiles, i.e., $x:\{0,1\}^{n}\rightarrow \lbrack 0,1]$ with 
$\sum_{v\in \{0,1\}^{n}}x(v)=1$. Take a player $1\leq i\leq n$ and a
possible strategy $b\in \{0,1\}$ for $i$. We say that the \emph{regret of $i$
for not playing $b$} is: 
\[
Regret_{i\rightarrow b}(x)=\sum_{v\in \{0,1\}^{n}}x(v)u_{i}(v^{i\rightarrow
b})-\sum_{v\in \{0,1\}^{n}}x(v)u_{i}(v). 
\]

\item \textbf{Correlated Equilibrium:} For $\varepsilon \geq 0$, we say that 
$x$ is an $\varepsilon $-\emph{correlated equilibrium} ($\varepsilon $-\emph{%
CE}) if for all $i=1,...,n$ and for all $b\in \{0,1\}$ we have $%
Regret_{i\rightarrow b}(x)\leq \varepsilon $. When $\varepsilon =0$ it is a 
\emph{correlated equilibrium} (\emph{CE}).
\end{itemize}

\subsection{The Computational Problem}

The computational problem is as follows.

\bigskip

\noindent \textbf{The (Approximate) Correlated Equilibrium (CE) problem:}

\begin{itemize}
\item \textbf{Input:} The payoff functions $u_{i}:\{0,1\}^{n}\rightarrow
\lbrack 0,1]$ for $i=1,...,n$ and the desired approximation parameter $%
\varepsilon \geq 0$.

\item \textbf{Queries:} We assume only black-box access to the payoffs: a
query is of the form $v\in \{0,1\}^{n},$ and the reply to the query is the $%
n $-tuple of player payoffs: $u_{1}(v),...,u_{n}(v)$.

\item \textbf{Output:} An $\varepsilon $-correlated equilibrium of the game.
The equilibrium is given by listing the probability $x(v)$ for every
strategy profile $v$ in its support.

\item \textbf{Cost: }The cost on a given input $u_{1},...,u_{n}$ and $%
\varepsilon \geq 0$ is the total number of queries made plus the size of the
support of the equilibrium produced. The cost of the algorithm is the
worst-case cost over all $n$-tuples of payoff functions.
\end{itemize}

\subsection{Concise Equilibrium Representations\label{weak}}

Notice that in the definition of the CE problem we counted the size of the
support of the produced equilibrium toward the cost of the algorithm. We
could alternatively talk about the \emph{weak-CE problem} where the
algorithm is allowed to produce an equilibrium with arbitrarily large
support whose size is not counted as part of the cost. Practically, the
algorithm could use some concise representation of the equilibrium, e.g.,
some mixture of product distributions. We show however that this would not
make the problem significantly easier, as any approximate CE algorithm that
makes a small number of queries can be converted to one that also produces a
CE with small support.

\begin{lemma}
\label{concise} Given an algorithm that solves the weak-CE problem with
error $\varepsilon $ in $T$ queries, there exists an algorithm that solves
the (strong) CE problem with error $O(\sqrt{\varepsilon })$ and cost
(including support size) $O(T)$.
\end{lemma}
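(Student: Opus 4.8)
The plan is to run the given weak-CE algorithm to obtain a distribution $x$ with large (possibly concise) support that is an $\epsilon$-CE, and then replace $x$ by an explicit small-support distribution $\tilde x$ that is close to $x$ in the appropriate sense. Since the algorithm makes only $T$ queries, the utilities $u_i$ have only been probed at $T$ profiles; in particular, both the regret expressions $Regret_{i\to b}(x)$ and the corresponding ones for any other distribution depend on the values $u_i(v)$, which are unconstrained outside the queried set. The natural move is therefore to condition or restrict $x$ to the set $Q$ of queried profiles, or to sample from $x$ to build $\tilde x$, while controlling how much the regrets move. I would take $\tilde x$ to be $x$ restricted to $Q$ and renormalized — so $|\mathrm{supp}(\tilde x)|\le T$ and no new queries are needed — and argue that the regrets change by at most $O(\sqrt{\epsilon})$.

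The key quantitative step is bounding $x(\{0,1\}^n\setminus Q)$, the mass $x$ places outside the queried set. Here is where approximation necessarily enters and where the $\sqrt\epsilon$ comes from. Consider the adversary's freedom: having seen only the queries in $Q$, we may imagine the ``true'' game has utilities on $\{0,1\}^n\setminus Q$ chosen to punish mass placed there. Concretely, if $x$ puts mass $\mu := x(\{0,1\}^n\setminus Q)$ outside $Q$, then for a suitable choice of off-$Q$ utilities (consistent with all query answers) some player $i$ and deviation $b$ has regret at least on the order of $\mu^2$ — intuitively, the regret from a profile $v\notin Q$ toward $v^{i\to b}$ can be made $1$ for roughly $\mu$ worth of mass whose partners $v^{i\to b}$ are also outside $Q$, and a counting/pigeonhole argument shows a constant fraction of the off-$Q$ mass has this property once $\mu$ is not too small. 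Since the algorithm is guaranteed to output an $\epsilon$-CE for \emph{every} game, in particular for this adversarial completion, we must have $c\mu^2 \le \epsilon$, hence $\mu = O(\sqrt\epsilon)$.

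With $\mu = O(\sqrt\epsilon)$ in hand, the rest is routine: renormalizing $x\!\restriction_Q$ by $1/(1-\mu)$ changes every probability by a multiplicative $1+O(\mu)$, and since all utilities lie in $[0,1]$ and regrets are differences of expectations, each $Regret_{i\to b}(\tilde x)$ differs from $Regret_{i\to b}(x)$ by $O(\mu) = O(\sqrt\epsilon)$; combined with $Regret_{i\to b}(x)\le\epsilon$ this gives $\tilde x$ an $O(\sqrt\epsilon)$-CE with support size $\le T$ and total cost $O(T)$. I expect the main obstacle to be the second paragraph: making the ``adversarial completion forces $\mu^2\lesssim\epsilon$'' argument fully rigorous, i.e. choosing the off-$Q$ utilities so that the induced regret is genuinely $\Omega(\mu^2)$ while remaining consistent with the (adaptive) query transcript — one must be careful that the transcript, being a function of the true utilities, is the same for the modified game, so the modification may only touch profiles never queried, and one needs the pigeonhole step to produce enough such ``matched'' pairs $(v, v^{i\to b})$ both lying outside $Q$.
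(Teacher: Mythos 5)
Your final step (restrict to the queried set, renormalize, and note that the regrets move by at most a constant times the discarded mass) is fine and matches the paper's conversion. The genuine gap is your second paragraph: the claim that correctness alone forces the off-$Q$ mass of the \emph{unscaled} run to satisfy $\mu=O(\sqrt{\epsilon})$ is false, even for $\epsilon=0$. The adversarial-completion argument you sketch only produces \emph{positive} regret contributions of order $\mu$ from the off-$Q$ mass; it says nothing about the on-$Q$ mass, which can contribute regret as negative as $-\alpha$, where $\alpha$ is the largest utility value seen at queried profiles. Concretely, suppose an algorithm queries $v^0=(0,\dots,0)$ and all its neighbors and observes $u_j(v^0)=1$ for all $j$ and $u_j(v^{0(i)})=0$ for all $i,j$; let it output mass $1/2$ on $v^0$ and mass $1/2$ spread over (unqueried) profiles at Hamming distance $2$ from $v^0$. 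For \emph{every} consistent completion, $Regret_{i\rightarrow 1}\le -\tfrac12+\tfrac12\le 0$ (the term at $v^0$ is $-1$ per unit mass) and $Regret_{i\rightarrow 0}\le 0$ (each deviation target is a queried neighbor with observed utility $0$), so this is an exact CE answer and the algorithm is a legitimate zero-error weak-CE algorithm with $n+1$ queries --- yet it puts mass $1/2$ outside $Q$. So no bound of the form $c\mu^2\le\epsilon$ can hold, and with it your route to $O(\sqrt{\epsilon})$ collapses.

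What the paper proves instead is exactly the $\alpha$-dependent version of your claim: if $\alpha$ is the maximum utility of player $1$ on the (closure $Q'$ of the) queried set, then the output can put weight at most $2(\alpha+\epsilon)$ outside $Q'$ --- which is vacuous when $\alpha$ is large, as the example above shows it must be. The missing idea is then a scaling trick: run the weak-CE algorithm not on $u$ but on $u'_i=\alpha u_i$ with $\alpha=\Theta(\sqrt{\epsilon})$. Now every observed utility is at most $\alpha$, so the off-$Q'$ mass is at most $2(\alpha+\epsilon)$, while scaling back turns the $\epsilon$-guarantee into an $(\epsilon/\alpha)$-guarantee for the original game. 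Restricting to $Q'$ and renormalizing (your routine step) then gives regret at most $\epsilon/\alpha+O(\alpha+\epsilon)$, and the choice $\alpha=\Theta(\sqrt{\epsilon})$ balances the two error sources to give $O(\sqrt{\epsilon})$ with support size $O(T)$. Without this scaling (or some substitute that controls the negative on-$Q$ ballast), the bound on the discarded mass that your argument needs simply is not available.
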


We first bound the probability that a CE algorithm may assign to un-queried
profiles.

\begin{lemma}
\label{l:1}Consider any $\varepsilon $-CE algorithm that on some input $%
(u_{1},...,u_{n})$ queries a set $Q$ of profiles and outputs an $\varepsilon 
$-CE $x$. Denote $Q^{\prime }=\{v|v\in Q$ or $v^{(1)}\in Q\}$ and $\alpha
=\max_{v\in Q^{\prime }}u_{1}(v)$; then $x$ puts probability at most $%
2(\alpha +\varepsilon )$ outside $Q^{\prime }$, i.e., $\sum_{v\not\in
Q^{\prime }}x(v)\leq 2(\alpha +\varepsilon )$.
\end{lemma}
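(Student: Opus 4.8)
The plan is to run an adversary argument based on the observation that the algorithm has learned nothing about $u_1$ on profiles outside $Q'$: for $v\notin Q'$, neither $v$ nor $v^{(1)}$ was queried. The first step is the elementary remark that $Q'$, and hence its complement, is closed under flipping player $1$'s coordinate --- $v\in Q'$ precisely when $v^{(1)}\in Q'$ --- so that for every $v\notin Q'$ the profile $v$ and both deviation targets $v^{1\rightarrow 0},v^{1\rightarrow 1}$ (which together are just $\{v,v^{(1)}\}$) also lie outside $Q'$.

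The core of the proof is then to modify $u_1$ off $Q'$ adversarially and invoke correctness of the algorithm. If $\tilde u_1$ agrees with $u_1$ on $Q'$ (in particular on the queried set $Q\subseteq Q'$) but is arbitrary outside $Q'$, then on the game $(\tilde u_1,u_2,\dots,u_n)$ the algorithm receives identical replies to all its queries, hence makes the same queries and outputs the same $x$; since the algorithm is correct, $x$ is an $\epsilon$-CE of the modified game as well, so $Regret_{1\rightarrow 0}(x)\le\epsilon$ and $Regret_{1\rightarrow 1}(x)\le\epsilon$ when evaluated with $\tilde u_1$. Writing $W=\sum_{v\notin Q'}x(v)$, I would first take $\tilde u_1(v)=v_1$ for $v\notin Q'$. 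Splitting $Regret_{1\rightarrow 1}(x)=\sum_v x(v)(\tilde u_1(v^{1\rightarrow 1})-\tilde u_1(v))$ into its $Q'$-part and its complement: the complement equals exactly $\sum_{v\notin Q',\,v_1=0}x(v)$, since there $\tilde u_1(v^{1\rightarrow 1})=1$ and $\tilde u_1(v)=v_1$; and the $Q'$-part is at least $-\sum_{v\in Q'}x(v)u_1(v)\ge-\alpha$, using $0\le u_1\le\alpha$ on $Q'$ and $\sum_{v\in Q'}x(v)\le 1$. Hence $\sum_{v\notin Q',\,v_1=0}x(v)\le\alpha+\epsilon$. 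The mirror-image choice $\tilde u_1(v)=1-v_1$ for $v\notin Q'$ together with $Regret_{1\rightarrow 0}(x)\le\epsilon$ gives $\sum_{v\notin Q',\,v_1=1}x(v)\le\alpha+\epsilon$; adding the two bounds yields $W\le 2(\alpha+\epsilon)$.

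I expect the one point that needs care to be the indistinguishability step --- verifying that changing $u_1$ only off $Q'$ truly leaves the algorithm's adaptive query sequence and its output unchanged, which is a straightforward induction on the queries for a deterministic algorithm and follows by conditioning on the coins for a randomized one. The genuine design decision is to work with $Q'$ rather than the raw set $Q$: this is precisely what ensures that for $v\notin Q'$ the deviation target $v^{1\rightarrow b}$ also falls where $\tilde u_1$ is unconstrained, so the regret can be driven to reflect the whole of $W$ rather than only the weight sitting on one of the two halves $\{v_1=0\}$, $\{v_1=1\}$.
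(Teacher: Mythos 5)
Your proof is correct and is essentially the paper's argument: the modification $\tilde u_1(v)=v_1$ off $Q'$ is exactly the paper's second perturbation, the indistinguishability of the modified input and the bound $u_1\le\alpha$ on $Q'$ are used in the same way. The only difference is presentational --- the paper assumes WLOG that at least half the outside weight sits on $\{v_1=0\}$ and derives a contradiction by comparing two perturbed games, whereas you bound each half $\{v_1=0\}$, $\{v_1=1\}$ by $\alpha+\epsilon$ directly and add.
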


\begin{proof}
Assume by way of contradiction that this is not the case, and furthermore
assume without loss of generality that at least half of this weight is on $v$%
's with $v_{1}=0$, i.e., $\sum_{v\not\in Q^{\prime }\text{ and }%
v_{1}=0}x(v)>\alpha +\varepsilon $. Now consider changing the input payoffs
in two different ways without changing the queried profiles. The first way
just assigns $u_{1}(v)=0$ for all $v\not\in Q^{\prime }$. In the second way
we put $u_{1}(v)=0$ for all $v\not\in Q^{\prime }$ with $v_{1}=0$ and $%
u_{1}(v)=1$ for $v\not\in Q^{\prime }$ with $v_{1}=1$. Clearly the gap
between $Regret_{1\rightarrow 1}$ for these two cases is exactly $%
\sum_{v\notin Q^{\prime }\text{ and }v_{1}=0}x(v)>\alpha +\varepsilon $.
However, notice that since $\alpha =\max_{v\in Q^{\prime }}u_{1}(v)$, in the
first way, $u_{1}(v)$ is bounded by $\alpha $ for all $v$ and thus $%
|Regret_{1\rightarrow 1}|\leq \alpha $; in the second way we therefore have $%
Regret_{1\rightarrow 1}>\varepsilon $. This contradicts the correctness of
the $\varepsilon $-CE algorithm on the input obtained by the second way of
changing the original input.
\end{proof}

\bigskip

We now complete the proof of the computational equivalence between the two
versions of the approximate CE problem.

\bigskip

\begin{proof}[Proof of Lemma \protect\ref{concise}]
Let us first run the weak-CE algorithm on $\alpha $-scaled payoffs $%
u_{i}^{\prime }(v)=\alpha u_{i}(v)$ to obtain an $\varepsilon $-CE $%
x^{\prime }(v)$ for the $u^{\prime }$'s (with $\alpha $ to be determined
below). By scaling back we have that $x^{\prime }$ is an $(\varepsilon
/\alpha )$-CE for the original $u$'s: $\sum_{v\in \{0,1\}^{n}}x^{\prime
}(v)(u_{i}(v^{i\rightarrow b})-u_{i}(v))\leq \varepsilon /\alpha $. As in
Lemma \ref{l:1}, denote $Q^{\prime }=\{v|v\in Q$ or $v^{(1)}\in Q\},$ where $%
Q$ is the set of queries made by the algorithm. Our output will be $x(v)=0$
for $v\not\in Q^{\prime }$ and $x(v)=\beta x^{\prime }(v)$ for $v\in
Q^{\prime }$, where $\beta =1/\sum_{v\in Q^{\prime }}x^{\prime }(v)$ is the
scaling factor ensuring that $\sum_{v\in \{0,1\}^{n}}x(v)=1$. By Lemma \ref%
{l:1} $\sum_{v\not\in Q^{\prime }}x^{\prime }(v)\leq 2(\alpha +\varepsilon
), $ and so $\beta \leq 1/(1-2(\alpha +\varepsilon ))$ and 
\begin{eqnarray*}
\sum_{v\in \{0,1\}^{n}}x(v)(u_{i}(v^{i\rightarrow b})-u_{i}(v)) &=&\beta
\sum_{v\in Q^{\prime }}x^{\prime }(v)(u_{i}(v^{i\rightarrow b})-u_{i}(v)) \\
&\leq &\beta \sum_{v\in \{0,1\}^{n}}x^{\prime }(v)(u_{i}(v^{i\rightarrow
b})-u_{i}(v)) \\
&&+\beta \sum_{v\notin Q^{\prime }}x^{\prime }(v)|u_{i}(v^{i\rightarrow
b})-u_{i}(v)| \\
&\leq &\frac{1}{1-2(\alpha +\varepsilon )}\left( \frac{\varepsilon }{\alpha }%
+2(\alpha +\varepsilon )\right)
\end{eqnarray*}%
Choosing $\alpha =O(\sqrt{\varepsilon })$ completes the proof.
\end{proof}

\section{Deterministic Algorithms for Approximate Correlated Equilibria}

This section proves that deterministic algorithms require exponentially many
queries to compute even an approximate equilibrium. Notice that due to Lemma %
\ref{concise} the same is also implied for the weak-CE\ problem that allows
arbitrary concise representations of the output. Our result here is:

\bigskip

\noindent \textbf{Theorem A.} \emph{Every \textbf{deterministic} algorithm
that finds a }$1/2$\emph{-approximate correlated equilibrium in every }$n$%
\emph{-person} \emph{bi-strategy game with payoffs in }$\{0,1\}$\emph{\
requires }$2^{\Omega (n)}$\emph{\ queries in the worst case.}

\subsection{The Approximate Sink Problem}

Our lower bound here will be based on analyzing the following combinatorial
problem on the (Boolean) hypercube $\{0,1\}^{n}.$

\bigskip

\noindent \textbf{The \textquotedblleft Approximate Sink\textquotedblright\
(AS) problem:}

\begin{itemize}
\item \textbf{Input:} A labeling of the edges of the hypercube by
directions; i.e., for every edge $(v,v^{(i)})$ we have a weight (or
direction) $R(v,v^{(i)})\in \{-1,1\}$ such that $R(v,v^{(i)})=-R(v^{(i)},v)$%
. We interpret $R(v,v^{(i)})=1$ as the edge going from $v$ to $v^{(i)},$ and 
$R(v^{(i)},v)=-1$ as the edge going from $v^{(i)}$ to $v.$

\item \textbf{Queries:} The algorithm queries a vertex $v\in \{0,1\}^{n}$
and gets the directions $R(v,v^{(i)})$ of all edges adjacent to $v$.

\item \textbf{Output:} The algorithm wins when it queries a vertex $v$ with
in-degree no less than $n/4$, i.e.,\footnote{$\sum_{i}R(v,v^{(i)})$ is the
net outflow through $v,$ i.e., the out-degree $O(v)$ of $v$ minus its
in-degree $I(n);$ because $O(v)+I(v)=n,$ we have $I(v)\geq n/4$ iff $%
\sum_{i}R(v,v^{(i)})\leq n/2.$} $\sum_{i}R(v,v^{(i)})\leq n/2$.

\bigskip
\end{itemize}

We show that exponentially many queries are needed in order to find such a
vertex. This implies Theorem A due to the following simple reduction:

\begin{lemma}
\label{asace} If there exists a deterministic algorithm that in every game
with payoffs in $\{0,1\}$ finds a $1/2$-CE with at most $T$ queries, then
the $AS$ problem can be solved with at most $T$ queries.
\end{lemma}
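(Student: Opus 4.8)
The goal is a reduction: given an instance $R$ of the AS problem, build a win-lose game $u_1,\dots,u_n$ so that any $1/2$-CE of the game, together with the queries used to find it, reveals a vertex of high in-degree. The natural construction is to set $u_i(v)$ so that player $i$'s incentive to deviate at $v$ encodes the edge direction $R(v,v^{(i)})$. Concretely, I would define $u_i(v) = 1$ if $R(v,v^{(i)}) = 1$ (the edge points \emph{into} $v$ from $v^{(i)}$... or out; the sign convention has to be pinned down carefully) and $u_i(v)=0$ otherwise — i.e. $u_i(v)$ depends only on the single edge of the hypercube in direction $i$ at $v$, and is determined by which endpoint that edge ``prefers''. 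A single query to the game at $v$ returns $u_1(v),\dots,u_n(v)$, which is exactly the information ``directions of all $n$ edges adjacent to $v$'', so a $T$-query CE algorithm induces a $T$-query AS algorithm step-for-step.

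**Key steps, in order.** First, fix the sign convention so that $u_i(v) > u_i(v^{(i)})$ exactly when $R$ orients the edge $\{v,v^{(i)}\}$ toward $v$; with $0/1$ utilities this means on each edge one endpoint gets value $1$ and the other gets $0$ for player $i$. Second, run the assumed CE algorithm; whenever it queries a profile $v$, answer with the $n$-tuple $(u_1(v),\dots,u_n(v))$, which we can compute from $R(v,\cdot)$ — so the simulation costs exactly one AS query per CE query, hence $\le T$ queries total. Third — the heart of the argument — take the output $x$, a $1/2$-CE, and show it must place positive weight on some vertex $v$ that the algorithm has queried (this is where the support-size-counted-in-cost / Lemma from the previous section, or a direct argument, is needed: an algorithm making few queries cannot meaningfully put weight on un-queried profiles). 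Fourth, argue that this $v$ in the support must have in-degree $> n/4$: if $v$ had in-degree $\le n/4$, then out-degree $\ge 3n/4$, meaning for at least $3n/4$ coordinates $i$ we have $u_i(v) < u_i(v^{(i)})$, i.e. $u_i(v)=0$ and $u_i(v^{(i)})=1$. I then need to aggregate these per-vertex deviation gains into a genuine \emph{regret} bound: summing the incentive over the support, some player $i$ has $Regret_{i\to b}(x) > 1/2$, contradicting the $1/2$-CE property. The cleanest route is probably to show that if \emph{every} queried support vertex had in-degree $\le n/4$, then averaging over players, the total regret $\sum_i \sum_b Regret_{i\to b}(x)$ exceeds $n/2$, forcing some individual regret above $1/2$.

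**Main obstacle.** The delicate point is step four: converting a \emph{local} statement (each support vertex has many out-edges) into a statement about \emph{regret}, which is an average of utility differences against the \emph{correlated} distribution $x$, not a per-vertex quantity. The issue is that $Regret_{i\to b}(x) = \sum_v x(v)\,[u_i(v^{i\to b}) - u_i(v)]$ only sees the deviation in coordinate $i$, and for a fixed $v$ the term $u_i(v^{i\to b}) - u_i(v)$ is nonzero only for the one value of $b$ equal to $1-v_i$, giving $\pm 1$. So $\sum_b Regret_{i\to b}(x) = \sum_v x(v)\,(\text{$+1$ if edge $i$ at $v$ points out, $-1$ if in})$, and summing over $i$ gives $\sum_{i,b} Regret_{i\to b}(x) = \sum_v x(v)(\mathrm{outdeg}(v) - \mathrm{indeg}(v)) = \sum_v x(v)(n - 2\,\mathrm{indeg}(v))$. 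If every $v$ in $\mathrm{supp}(x)$ has $\mathrm{indeg}(v) \le n/4$ this is $\ge n - n/2 = n/2$, so some $Regret_{i\to b}(x) \ge (n/2)/(2n) = 1/4$ — which is not quite $> 1/2$. Closing this gap is the real work: I would fix it either by sharpening the threshold (win when in-degree $\le 3n/8$, say, and check this still matches Theorem \ref{det}'s ``$n/4$''), or by a more careful argument that some single player's regret, not the average, is large — exploiting that the support is supported on \emph{queried} vertices and choosing the game/sign-pattern adversarially against the specific query set. Getting the constants to line up so that ``$1/2$-CE'' precisely contradicts ``in-degree $\le n/4$ on all queried support vertices'' is the one computation I would actually have to do with care.
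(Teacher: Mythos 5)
Your reduction is, in substance, exactly the paper's proof: the same edge-to-utility encoding (one endpoint of each edge gets $u_i=1$, the other $u_i=0$), the same query-for-query simulation, and the same aggregation $\sum_{i,b} Regret_{i \rightarrow b}(x)=\sum_v x(v)\sum_i\bigl(u_i(v^{(i)})-u_i(v)\bigr)=\sum_v x(v)\sum_i R(v^{(i)},v)$. One simplification relative to your third step: you do not need to argue that $x$ places weight on already-queried vertices, and Lemma \ref{concise} is not needed here. The paper's AS algorithm simply queries, after the simulation ends, every profile in the support of the output $x$; since the cost $T$ of the CE algorithm counts the support size together with the number of queries, this stays within $T$ AS queries, and the algorithm wins if any support vertex meets the AS condition. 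The averaging step then only has to exhibit one support vertex with small $\sum_i R(v^{(i)},v)$, which is what the summed regret bound gives.

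Concerning the constant you flag: your arithmetic is correct, and the same factor of two is glossed over in the paper itself. The paper asserts $\sum_{i,b}Regret_{i\rightarrow b}(x)\le n/2$ for a $1/2$-CE, but there are $2n$ regret inequalities, each bounded by $1/2$, so the summation literally yields only $\le n$; what the written argument actually delivers is your computation, i.e.\ that a $1/4$-CE algorithm (more generally, an $\epsilon$-CE algorithm with $\epsilon<1/4$) solves AS with its stated threshold. Your proposed repair is the right one and is where the slack genuinely lives: the politeness/closure constants on the AS side can be taken as $\delta n$ for any fixed $\delta>0$ (the closure counting and the iso-perimetric bound still give a $2^{\Omega(n)}$ bound), and matching the AS win condition to ``in-degree greater than $\delta n$'' makes the reduction work for every fixed $\epsilon<1/2$. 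Do not, however, spend effort trying to reach exactly $1/2$ by a sharper single-player argument: with the paper's (unconditional) regret definition, the uniform distribution on a profile and its bitwise complement $\{v,\bar{v}\}$ satisfies $Regret_{i\rightarrow b}\le 1/2$ in every win-lose game, and it has support $2$ and needs no queries, so the ``$1/2$'' in Theorem \ref{det} must in any case be read as ``any constant below $1/2$''.
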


\begin{proof}
Given an AS instance we build a CE instance with\footnote{%
Edges are thus pointed in the direction of increasing payoff (i.e., positive
regret): from $v$ to $v^{(i)}$ when $u_{i}(v^{(i)})>u_{i}(v),$ and from $%
v^{(i)}$ to $v$ when $u_{i}(v)>u_{i}(v^{(i)}).$} $%
R(v,v^{(i)})=u_{i}(v^{(i)})-u_{i}(v)$ and run the approximate CE algorithm
on it, translating every query the CE algorithm makes to an AS query. For $%
R(v,v^{(i)})=1$ we set $u_{i}(v^{(i)})=1$ and $u_{i}(v)=0$, while for $%
R(v,v^{(i)})=-1$ we set $u_{i}(v^{(i)})=0$ and $u_{i}(v)=1$. Under this
mapping, when the CE algorithm makes a query $(u_{1}(v),...,u_{n}(v))$ it is
immediately translated to the same query $v$ on the original AS instance,
and $R(v,v^{(i)})=1$ means $u_{i}(v)=0$ while $R(v,v^{(i)})=-1$ means $%
u_{i}(v)=1$. When the CE algorithm produces an approximate equilibrium $x$,
we continue by querying all the profiles in the support of $x$, whose number
was, by definition, already counted toward the cost of the CE algorithm.

Now take a $1/2$-equilibrium output by the CE algorithm. Summing up the
inequalities of the CE we get: $\sum_{i,b}Regret_{i\rightarrow
b}=\sum_{v}x(v)\sum_{i}(u_{i}(v^{(i)})-u_{i}(v))\leq n/2$, which implies
that for some $v$ in the support of $x$ we have $\sum_{i}R(v,v^{(i)})=%
\sum_{i}(u_{i}(v^{(i)})-u_{i}(v))\leq n/2$, as needed.
\end{proof}

\subsection{Polite Algorithms}

We now prove the lower bound for the AS problem. The core of our argument is
to show that every relevant algorithm for the AS problem (see Lemma 9) can
be transformed into an algorithm of the following form, without
significantly increasing the number of queries.

\begin{definition}
We call an algorithm for the AS problem \emph{polite} if whenever a vertex $%
v $ is queried, at least $3n/4$ of its neighbors in the hypercube have not
yet been queried.
\end{definition}

It is quite easy to show that polite algorithms cannot solve AS.

\begin{lemma}
No deterministic polite algorithm can solve the AS problem.
\end{lemma}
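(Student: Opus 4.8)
The plan is a short adversary (online) argument. Given any deterministic polite algorithm $A$ for the AS problem, I would run it against an adversary that answers each query ``on the fly'' while maintaining a partial orientation of the hypercube edges (extending it to a full AS instance arbitrarily at the end). The invariant to maintain is that \emph{no vertex $A$ has queried satisfies the winning condition $\sum_i R(v^{(i)},v)\le n/2$}. Since $A$ is deterministic, its run against the adversary is identical to its run on the resulting instance, so on that instance $A$ never queries a winning vertex and hence fails to solve AS.

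The adversary's answering rule: when $A$ queries a vertex $v$ it has not queried before (we may assume $A$ queries each vertex at most once, since a repeat query is answered the same way and is useless), look at the $n$ incident edges $\{v,v^{(i)}\}$. Those whose other endpoint $v^{(i)}$ has already been queried are already oriented — keep them. Each remaining edge, i.e. each one whose other endpoint is still un-queried, is oriented ``towards $v$'': set $R(v^{(i)},v)=1$ (equivalently $R(v,v^{(i)})=-1$). Then report the values $R(v,v^{(i)})$, $i=1,\dots,n$. Every edge is oriented exactly once, so this is always a legitimate partial orientation.

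To verify the invariant, fix any vertex $v$ queried by $A$ and call an incident edge \emph{favorable} (for $v$) if $R(v^{(i)},v)=1$ and \emph{unfavorable} otherwise. The key claim is that an edge of $v$ is unfavorable only if $v^{(i)}$ was queried strictly before $v$: if $v^{(i)}$ was queried after $v$ or never, then when $v$ was queried this edge was still un-oriented, so the adversary set $R(v^{(i)},v)=1$; if $v^{(i)}$ was queried before $v$, then when $v^{(i)}$ was queried the vertex $v$ was un-queried, so the adversary oriented this edge towards $v^{(i)}$, i.e. $R(v,v^{(i)})=1$ and hence $R(v^{(i)},v)=-1$. By politeness, at the moment $v$ was queried at least $3n/4$ of its neighbors were un-queried, so at most $n/4$ of them were queried before $v$; thus $v$ has at least $3n/4$ favorable and at most $n/4$ unfavorable incident edges, whence
$$\sum_i R(v^{(i)},v)\;\ge\;\frac{3n}{4}-\frac{n}{4}\;=\;\frac{n}{2},$$
so $v$ does not satisfy the winning condition. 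This holds for every queried vertex, establishing the invariant and the lemma.

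This is the ``easy'' direction, so I do not expect a real obstacle; the one point that needs care is the bookkeeping in the last step — correctly tracking, for each edge, whether it was already committed at the time a given endpoint was queried, so that politeness (a bound on already-queried \emph{neighbors}) is converted cleanly into a bound on the number of unfavorable edges. The only genuinely delicate issue is the boundary case $\sum_i R(v^{(i)},v)=n/2$, which is handled by the precise constants and parity (e.g. choosing $n$ so that $3n/4$ is not an integer, forcing \emph{strictly} fewer than $n/4$ previously-queried neighbors and hence $\sum_i R(v^{(i)},v)>n/2$); this does not affect the $2^{\Omega(n)}$ conclusion drawn from the lemma.
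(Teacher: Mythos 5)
Your proposal is correct and is essentially the paper's own argument: the same adversary that, at each queried vertex, orients every not-yet-committed incident edge favorably, so politeness leaves at most $n/4$ previously committed (possibly unfavorable) edges and the vertex cannot be a winner. Your remark about the boundary case $\sum_i R(v^{(i)},v)=n/2$ only reflects a slight mismatch in the paper between the ``in-degree more than $n/4$'' and ``$\le n/2$'' phrasings of the winning condition (and is harmless, since the closure-based simulation actually yields strictly fewer than $n/4$ previously queried neighbors), so no fix such as restricting $n$ is really needed.
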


\begin{proof}
We provide an adversary argument: whenever a vertex is queried, the
adversary answers with all edges that were previously not committed to
pointing out. Since there are at least $3n/4$ such edges, the in-degree is
at most $n/4$, and so this vertex cannot be an answer.
\end{proof}

\subsection{Closure}

To convert an algorithm to a polite form, we need to make sure that vertices
are queried before too many of their neighbors are. We use the following
notion:

\begin{definition}
For a set $V\subseteq \{0,1\}^{n}$ of vertices in the hypercube, we define
its \emph{closure} $V^{\ast }\subseteq \{0,1\}^{n}$ to be the smallest set
containing $V$ such that for every $v\not\in V^{\ast }$ at most $n/8$ of its
neighbors are in $V^{\ast }$.
\end{definition}

The closure is well defined and can be obtained by starting with $V^{\ast
}=V $ and repeatedly adding to $V^{\ast }$ any vertex $v$ that has more than 
$n/8 $ of its neighbors already in $V^{\ast }$. Clearly the order of
additions does not matter since the number of neighbors a vertex has in $%
V^{\ast }$ only increases as other vertices are added to $V^{\ast }$. When
the process stops every $v\not\in V^{\ast }$ has at most $n/8$ of its
neighbors in $V^{\ast }$.

The point is that we will not need to continue this process of adding
vertices for a long time.

\begin{lemma}
If $|V|<2^{n/8-1}$ then $|V^{\ast }|\leq 2|V|$.
\end{lemma}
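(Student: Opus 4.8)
The plan is to argue by contradiction using an edge-counting argument together with the edge-isoperimetric inequality for the hypercube. Write $t = |V|$, and recall that $V^*$ is produced by a sequence of additions $v_1, v_2, \dots$, where $v_k$ gets added precisely because at that moment more than $n/8$ of its hypercube-neighbors already lie in the partially built closure $V \cup \{v_1,\dots,v_{k-1}\}$. Suppose toward a contradiction that $|V^*| > 2t$. Then at least $t$ additions occur, so $v_1,\dots,v_t$ all exist; they are distinct and disjoint from $V$, so the set $S := V \cup \{v_1,\dots,v_t\}$ has exactly $2t$ vertices.

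First I would lower-bound the number $e(S)$ of hypercube edges with both endpoints in $S$. For each $k \le t$, the vertex $v_k$ has more than $n/8$ neighbors inside $V \cup \{v_1,\dots,v_{k-1}\} \subseteq S$; call these the \emph{backward} edges of $v_k$. Any fixed edge of the cube is the backward edge of at most one $v_k$ --- namely of whichever of its endpoints (if any) was the later of the two to be added --- so summing the backward degrees over $k = 1,\dots,t$ counts distinct edges of $S$ and yields $e(S) > t \cdot n/8$.

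Next I would upper-bound $e(S)$. By the edge-isoperimetric inequality on $\{0,1\}^n$, any set of $s$ vertices spans at most $\tfrac12 s \log_2 s$ edges (with equality on subcubes, where $s$ is a power of two). Hence $e(S) \le \tfrac12 (2t)\log_2(2t) = t \log_2(2t)$. Now the hypothesis $|V| < 2^{\,n/8-1}$ gives $2t < 2^{\,n/8}$, so $\log_2(2t) < n/8$ and therefore $e(S) < t \cdot n/8$, contradicting the lower bound. Thus $|V^*| \le 2|V|$.

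The only nontrivial ingredient is the isoperimetric bound $e(S) \le \tfrac12 |S|\log_2|S|$; if one wants the argument self-contained, I would prove it by induction on $n$, splitting the cube into its two facets $S_0, S_1$ (copies of $\{0,1\}^{n-1}$), using $e(S) = e(S_0) + e(S_1) + |S_0 \cap S_1|$ with $|S_0\cap S_1| \le \min(|S_0|,|S_1|)$, the inductive bounds on $e(S_0)$ and $e(S_1)$, and convexity of $s\mapsto s\log_2 s$ (which makes the worst case $|S_0| = |S_1|$, where equality holds). Everything else is bookkeeping: verifying that each backward edge is charged at most once, and noting that the exponent $n/8 - 1$ in the hypothesis is exactly the threshold that makes the per-vertex backward-degree lower bound ($> n/8$) beat the isoperimetric average-degree upper bound ($\log_2(2|V|)$).
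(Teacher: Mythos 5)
Your proof is correct and follows essentially the same route as the paper: truncate the closure process after the first $|V|$ additions, lower-bound the edges spanned by the resulting $2|V|$-vertex set via the per-addition "backward" degree $>n/8$, and upper-bound them by the hypercube edge-isoperimetric inequality, getting a contradiction with $|V|<2^{n/8-1}$. The only cosmetic difference is that you count undirected edges (bound $\tfrac12 s\log_2 s$) while the paper counts directed ones (bound $s\log_2 s$), which is the same argument.
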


\begin{proof}
Assume by way of contradiction that $|V^{\ast }|>2|V|,$ and denote by $U$
the set obtained during the process of building $V^{\ast }$ after adding
exactly $|V|$ vertices; thus $|U|=2|V|$. Let us denote by $e(U)$ the number
of directed edges within $U$, i.e., $e(U)=|\{(u,i)|u\in U$ and $u^{(i)}\in
U\}|$. We provide conflicting lower and upper bounds for $e(U)$. For the
lower bound, notice that every vertex that we added during the process adds
at least $n/4$ edges to $e(U)$ ($n/8$ of its own edges as well as the $n/8$
opposite ones), and so $e(U)\geq |U-V|n/4=|V|n/4$. For the upper bound we
use the edge-isoperimetric inequality on the hypercube (Hart 1976), which
implies that for every subset of the hypercube $e(U)\leq |U|\log _{2}|U|$.
Thus we have $|V|n/4\leq |U|\log _{2}|U|=2|V|(\log _{2}|V|+1)$, and so $%
(1+\log _{2}|V|)\geq n/8,$ contradicting the bound on the size of $V$.
\end{proof}

\subsection{A Polite Simulation}

We can now provide our general simulation by polite algorithms, which
completes the proof of the theorem.

\begin{lemma}
Every algorithm that makes at most $T=2^{n/8-1}$ queries can be simulated by
a polite algorithm that makes at most $2T$ queries.
\end{lemma}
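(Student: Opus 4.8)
The plan is to build the polite simulation by interleaving the original algorithm's queries with extra "closure" queries, so that at every step the set of queried vertices is closed in the sense of the previous definition. Concretely, I would maintain a set $P$ of vertices actually queried by the simulation, and I would preserve the invariant that $P$ equals the closure $Q^*$ of the set $Q$ of vertices the original algorithm has requested so far. When the original algorithm asks for its next vertex $v$, the simulation does the following: it sets $Q \leftarrow Q \cup \{v\}$, recomputes $Q^*$, and queries all the newly-added vertices of $Q^*$ that are not yet in $P$ (in any order — by the remark after the closure definition the result is order-independent), answering each according to the true labeling $R$. It then hands the original algorithm the edge-directions at $v$, which it has now learned. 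Since the original algorithm is deterministic, the simulation is deterministic, and it outputs whatever the original algorithm outputs.

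The key point is the query bound. After the original algorithm has made at most $T = 2^{n/8-1}$ requests we have $|Q| \le T < 2^{n/8-1}$, wait — we need strict inequality to apply the closure-size lemma, so I would run the argument for $|Q| \le T$ with $T < 2^{n/8-1}$, or simply observe that $|Q| \le 2^{n/8-1}-1$ throughout and the lemma gives $|Q^*| \le 2|Q| \le 2T$. Hence the total number of vertices the simulation ever queries, $|P| = |Q^*|$, is at most $2T$, which is the claimed bound. (One should check the lemma's hypothesis is $|V| < 2^{n/8-1}$; I will phrase the statement so that the original algorithm makes at most $T$ queries with $T$ a shade below $2^{n/8-1}$, matching the numbers already used in the theorem's $2^{\Omega(n)}$ conclusion.)

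Next I must verify politeness: when the simulation queries a vertex $w$, at least $3n/4$ of its neighbors are still un-queried. Here I distinguish the two ways $w$ gets queried. If $w$ is added because it was pulled into the closure (it had more than $n/8$ neighbors already in the current $P = Q^*$), that is fine for the \emph{future}, but I actually need the bound at the moment of querying, so the cleaner approach is: process the closure additions in the order the closure-building process adds them, and note that when $w$ is added it has \emph{more than} $n/8$ neighbors in $P$ — that is the wrong direction. The right statement is the one guaranteed by closure: every vertex \emph{not} in $Q^*$ has \emph{at most} $n/8$ neighbors in $Q^*$. So instead of reasoning about query-time, I reason about the \emph{final} closed set: I claim that if $P$ is closed, then for every $w \in P$, at most $n/8$ of its neighbors are "added before it" in a suitable ordering — more simply, I reorder the queries so that each $w$ is queried as late as possible, i.e. I first run the original algorithm entirely (conceptually) to get $Q$, form $Q^*$, and then query the vertices of $Q^*$ in an order where, when $w$ is queried, the set already queried has at most $n/8$ of $w$'s neighbors. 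Such an order exists precisely because $Q^*$ is closed: peel off vertices by repeatedly removing a vertex of $Q^*$ with at most $n/8$ neighbors still remaining in the not-yet-removed part of $Q^*$ (this removal process terminates on all of $Q^*$ by the closure property applied to successive subsets), then reverse that peeling order. With at most $n/8$ neighbors inside $Q^*$ queried-before-$w$ and at most $0$ neighbors outside $Q^*$ ever queried, $w$ has at most $n/8 \le 3n/4$ wait — I need the complement to be large: $w$ has $n$ neighbors, at most $n/8$ are queried when $w$ is, so at least $7n/8 \ge 3n/4$ are un-queried. That is politeness.

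The main obstacle is reconciling the \emph{adaptivity} of the original algorithm with this "query $Q^*$ in peeling order" trick: the original algorithm chooses its next request based on answers, so I cannot form $Q$ up front. The fix is that politeness only constrains the simulation's \emph{own} query order, and the simulation is free to (a) when the algorithm requests $v$, first query, in a safe order, all of the not-yet-queried vertices of the new closure \emph{other than those forced purely by $v$'s arrival that must come after $v$'s own neighbors are revealed} — in fact the clean resolution is that whenever we query a batch of closure vertices we can always order that batch (together with the already-present closed set) as a peeling, and each batch vertex when queried sees only closed vertices around it, hence at most $n/8$; and (b) query $v$ itself last in its batch — but $v$ may have many queried neighbors. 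This is the real subtlety, and I expect to handle it by querying $v$ \emph{first} in the round (its neighbors in the old $Q^*$ number at most... not bounded!). So the genuinely correct move, and the one I will write up, is: never let the algorithm's own requested vertex be a problem by ensuring the invariant $P = Q^*$ means that when $v$ is requested, $v \notin Q^*$ in general, so $v$ has at most $n/8$ neighbors in the \emph{old} $Q^* = P$, hence at least $7n/8$ un-queried — so we query $v$ \emph{first}, which is polite; then we add the remaining forced closure vertices, each of which, in peeling order restricted to the freshly grown $Q^*$, sees at most $n/8$ already-queried neighbors. The one remaining case is $v \in$ old $Q^*$ already, but then $v$ was already queried and no new query is needed. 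Verifying that this ordering discipline is simultaneously consistent — query $v$ first, then peel the rest — and that "peel the rest" never forces a vertex whose already-queried-neighbor count exceeds $n/8$, is where the closure lemma does exactly the work needed, and that is the step to write carefully.

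\begin{proof}
We simulate the given algorithm, maintaining the set $Q$ of vertices it has requested so far and the set $P$ of vertices the simulation has actually queried, with the invariant $P=Q^{*}$. Initially $Q=P=\emptyset$. When the algorithm requests a vertex $v$: if $v\in Q^{*}$ (so $v\in P$ already), we reply with the already-known directions at $v$; otherwise we first query $v$ — at this moment $v\notin P=Q^{*}$, so by the closure property $v$ has at most $n/8$ neighbors in $P$, hence at least $7n/8\ge 3n/4$ of $v$'s neighbors are un-queried, so this query is polite. We then set $Q\leftarrow Q\cup\{v\}$ and grow the closure: let $W=(Q\cup\{v\})^{*}\setminus P$ be the set of vertices newly forced into the closure. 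We query the vertices of $W$ one at a time in an order obtained as follows: consider the set $(Q\cup\{v\})^{*}$; by the closure property, repeatedly removing from it a vertex (other than those in $P\cup\{v\}$) having at most $n/8$ neighbors among the vertices not yet removed exhausts all of $W$; querying $W$ in the reverse of this removal order guarantees that when a vertex $w\in W$ is queried, at most $n/8$ of its neighbors have been queried, so at least $7n/8\ge 3n/4$ are un-queried and the query is polite. After this, $P=(Q\cup\{v\})^{*}$, restoring the invariant, and we return the directions at $v$ to the algorithm. Being deterministic, the algorithm proceeds and the simulation outputs whatever it outputs.

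For the query bound: after the algorithm has made at most $T=2^{n/8-1}$ requests, $|Q|\le T< 2^{n/8}$, and since $|Q|\le 2^{n/8-1}-1<2^{n/8-1}$ at every step before the last, the closure-size lemma gives $|P|=|Q^{*}|\le 2|Q|\le 2T$ throughout. Hence the polite simulation makes at most $2T$ queries. Combining this with the fact that no deterministic polite algorithm solves the AS problem yields that no deterministic algorithm making at most $T=2^{n/8-1}$ queries solves the AS problem, and by Lemma~\ref{asace} the same bound applies to finding a $1/2$-CE in win-lose games, which is the theorem.
\end{proof}
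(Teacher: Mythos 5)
Your overall architecture matches the paper's: maintain the invariant that the simulation has queried exactly the closure $Q_t^*$ of the original algorithm's queries, query $q_t$ first (it is outside the old closure, hence has at most $n/8$ queried neighbors), and then query the newly forced closure vertices in the reverse of a ``peeling'' order. The query bound via the closure-size lemma is also right. But there is a genuine gap at the one step you yourself flag as ``where the closure lemma does exactly the work needed'': the existence of the peeling order does \emph{not} follow from the closure property. The closure property only constrains vertices \emph{outside} $V^*$ (they have at most $n/8$ neighbors inside); for vertices newly pulled \emph{into} the closure the minimality of $V^*$ gives exactly the opposite: every $w\in V^*\setminus V$ has \emph{more} than $n/8$ neighbors in $V^*$ (otherwise $V^*\setminus\{w\}$ would still be closed and contain $V$). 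So, as written, your removal process --- ``remove a vertex having at most $n/8$ neighbors among the vertices not yet removed'' --- cannot even take its first step if neighbors are counted in all of $(Q\cup\{v\})^*$, and if they are counted only inside $W$ there is still no reason such a low-internal-degree vertex exists: nothing in the closure definition prevents $W$ from being, say, a subcube in which every vertex has many internal neighbors.

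What actually makes the peeling possible is that $W=Q_t^*\setminus Q_{t-1}^*$ is \emph{small} (at most roughly $t\le T=2^{n/8-1}$ vertices, by the closure-size lemma), combined with the edge-isoperimetric inequality on the hypercube, $e(S)\le |S|\log_2|S|$: applied to $W$ and to each of its successive subsets, it yields a vertex with at most $\log_2 t < n/8$ neighbors inside the remaining set, and that vertex is placed last. Then each $w\in W$, when queried, has fewer than $n/8$ queried neighbors inside $W$ plus at most $n/8$ in the old closure $Q_{t-1}^*$, i.e.\ fewer than $n/4$ in total, which is what politeness ($\ge 3n/4$ unqueried neighbors) requires --- note also that your final count ``at most $n/8$ queried, hence $7n/8$ unqueried'' silently drops the up-to-$n/8$ neighbors in $P$. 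To repair your proof, replace the appeal to ``the closure property'' at the peeling step by the size bound on $W$ together with the isoperimetric inequality (this is exactly the paper's argument), and fix the neighbor count to $n/8+n/8<n/4$.
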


\begin{proof}
For $t=1,...,T$ denote by $q_{t}$ the $t$'th query made by the original
algorithm, and let $Q_{t}=\{q_{1},...,q_{t}\}$ be the set of all queries
made until time $t$. Our polite algorithm will simulate query $q_{t}$ by
querying all vertices in $Q_{t}^{\ast }$, i.e., completing the closure
implied by adding $q_{t}$. Notice that $Q_{t}^{\ast }=(Q_{t-1}\cup
\{q_{t}\})^{\ast }=(Q_{t-1}^{\ast }\cup \{q_{t}\})^{\ast }$. The difficulty
is that we need to add the vertices in $Q_{t}^{\ast }-Q_{t-1}^{\ast }$ in a
way that maintains politeness, i.e., such that each vertex is added before $%
n/4$ of its neighbors are.

To see that this is possible let us look at the vertices in $%
N_{t}=Q_{t}^{\ast }-Q_{t-1}^{\ast }$. First, the previous lemma implies that 
$|N_{t}|\leq |Q_{t}|=t$. By the edge-isoperimetric inequality applied to $%
N_{t}$ we have $e(N_{t})\leq t\log _{2}t$, and so some vertex $v\in N_{t}$
has at most $\log _{2}t<n/8$ neighbors in $N_{t};$ this will be the last
vertex our polite algorithm will query in this stage. Similarly, from the
remaining elements $N^{\prime }=N_{t}\backslash \{v\}$ there is also a
vertex $v^{\prime }$ with at most $\log _{2}(t-1)<n/8$ neighbors in $%
N^{\prime }$, and this vertex will be asked just before $v$. We continue so
until we exhaust $N_{t}$. Now we claim that this order maintains politeness:
since, by definition, every vertex in $N_{t}$ has fewer than $n/8$ neighbors
in $Q_{t-1}^{\ast }$, when we add the fewer than $n/8$ neighbors from $N_{t}$
that appeared \emph{before} it in the ordering of $N_{t}$, we still get
fewer than $n/4$ neighbors preceding it. Finally, notice that the simulating
algorithm queries, by Lemma 8, at most $2T$ vertices in $Q_{T}^{\ast },$ and
so its running time is as required.
\end{proof}

\section{Randomized Algorithms for Exact Correlated Equilibria\label%
{s:randomized}}

This section provides the lower bound for randomized algorithms. Recall that
randomized algorithms can in fact compute an \emph{approximate} CE with
polynomially many queries (using regret-based procedures). This section
proves that they cannot compute an \emph{exact} CE.

First let us formally define a randomized algorithm. A \emph{randomized
algorithm} is just a probability distribution over deterministic algorithms.%
\footnote{%
Which is the same as a \textquotedblleft behavioral" algorithm that makes
randomizations all along (cf. Kuhn's mixed vs. behavioral strategies in
games of perfect recall---which our algorithms clearly have, as we impose no
restrictions such as finite automata).} For every input, this random choice
of the algorithm results in the output being a random variable. We say that
a randomized algorithm solves a search problem (like our problem of finding
an equilibrium) if for every input, the probability that the output is a
correct solution is at least\footnote{%
With the complementary probability the output may be incorrect (and so this
is not a \textquotedblleft zero-error" algorithm).} $1/2$. The \emph{cost}
of a randomized algorithm on a given input is the expected cost made over
the random choice of the algorithm, and the cost of a randomized algorithm
is its cost for the worst-case input. So our theorem for this section is:

\bigskip

\noindent \textbf{Theorem B.} \emph{Every algorithm (randomized or
deterministic) that finds an \textbf{exact} correlated equilibrium in every }%
$n$\emph{-person bi-strategy game with payoffs specified as }$b$\emph{-bit
integers with }$b=\Omega (n)$\emph{\ requires a }$2^{\Omega (n)}$\emph{\
expected cost in the worst case.}

\bigskip

This theorem applies even to randomized algorithms that produce a CE with
any non-negligible probability. It also applies to the weak-CE version of
the problem defined in Section \ref{weak}, but only for zero-error
algorithms.\footnote{\label{ft:2}This is because Lemma \ref{concise} holds
also for zero-error randomized algorithms as it can be applied to each
deterministic algorithm in the support. (Our proof does not imply the
extension to the weak-CE case for general randomized algorithms, even though
we believe that the theorem itself does extend.)} Finally, it can be seen
from the proofs below that it also applies to $\varepsilon $-CE, for $%
\varepsilon $ that is exponentially small in $n$.

\subsection{The Non-Positive Vertex Problem}

Similarly to the deterministic case, we here reduce the correlated
equilibrium problem to the following combinatorial problem on the hypercube.
It is essentially a weighted version of the Approximate Sink problem, with a
stricter bound on the output quality.

\bigskip

\noindent \textbf{The \textquotedblleft Non-Positive
Vertex\textquotedblright\ (NPV) problem:}

\begin{itemize}
\item \textbf{Input:} A labeling of the directed edges of the hypercube by
integers where the convention is that $R(u,v)=-R(v,u)$.

\item \textbf{Queries:} A query is a vertex $v$ in the hypercube. The answer
to this query is the tuple of labels on all adjacent edges: $R(v,v^{(i)})$
for $i=1,...,n$.

\item \textbf{Output:} The algorithm must output a vertex $v$ in the
hypercube with total non-positive weight, i.e., $\sum_{i}R(v,v^{(i)})\leq 0$.
\end{itemize}

\bigskip

As in Lemma \ref{asace}, proving a randomized lower bound for the NPV
problem implies a similar bound for the CE problem due to the following
reduction.

\begin{lemma}
\label{l:npv}The number of queries required for a randomized NPV algorithm
to solve the NPV problem is at most the number of queries required for a
randomized algorithm to solve the CE problem.
\end{lemma}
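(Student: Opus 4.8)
The plan is to mimic the reduction in Lemma~\ref{asace}, but now exploiting the \emph{exact} (rather than $1/2$-approximate) CE requirement so that the averaging argument forces a \emph{genuinely} non-negative vertex. Given an NNV instance with integer edge labels $R(u,v) = -R(v,u)$, I would first normalize: let $M = \max_{(u,i)} |R(v,v^{(i)})|$ over all edges (or, if we want the reduction to work query-by-query without knowing $M$ in advance, note that the CE algorithm makes only polynomially many queries, so we only ever need the labels on edges adjacent to queried vertices — but since the CE problem as defined needs utilities in $[0,1]$, I will assume $M$ is a fixed parameter of the instance, or rescale afterward). Then define utilities $u_i(v) \in [0,1]$ so that $u_i(v) - u_i(v^{(i)}) = R(v,v^{(i)})/(2M)$ for every edge; concretely set $u_i(v) = \tfrac12 + \tfrac{1}{2M}\,(\text{something consistent along each edge})$ — the cleanest choice is $u_i(v) = \tfrac12 + \tfrac{1}{4M}\big(R(v,v^{(i)})\big)$ when we only care about the difference across the $i$-th edge, but since $u_i$ is a function of the whole profile we must be a bit careful: we can simply put $u_i(v) = \tfrac12 + \tfrac{1}{4M} R(v, v^{(i)})$, which is well-defined as a function of $v$ (it reads off the $i$-th coordinate of the label tuple at $v$) and lies in $[0,1]$, and then $u_i(v^{(i)}) = \tfrac12 + \tfrac{1}{4M} R(v^{(i)}, v) = \tfrac12 - \tfrac{1}{4M} R(v,v^{(i)})$, so indeed $u_i(v) - u_i(v^{(i)}) = \tfrac{1}{2M} R(v, v^{(i)})$.

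Next I would run the purported exact-CE algorithm on this game, translating each utility query $v$ into the single NNV query $v$ (the answer $R(v, v^{(i)})$ for all $i$ determines $u_1(v),\dots,u_n(v)$), and then, when the algorithm outputs an exact CE $x$, additionally query every profile in the support of $x$ — these queries were already counted in the cost of the CE algorithm, so the total NNV query count is bounded by the CE cost. Summing the equilibrium inequalities as in Lemma~\ref{asace}:
\[
0 \;\ge\; \sum_{i,b} Regret_{i \rightarrow b}(x) \;=\; \sum_v x(v) \sum_i \big(u_i(v^{(i)}) - u_i(v)\big) \;=\; -\frac{1}{2M}\sum_v x(v) \sum_i R(v, v^{(i)}),
\]
where the inequality uses that \emph{every} regret term is $\le 0$ in an exact CE (this is the crucial place where exactness, not just $1/2$-approximation, is used). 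Hence $\sum_v x(v) \sum_i R(v,v^{(i)}) \ge 0$, so there is some $v$ in the support of $x$ with $\sum_i R(v,v^{(i)}) \ge 0$, and that $v$ is a valid NNV output. For a randomized CE algorithm that succeeds with probability $\ge 1/2$, the same translation yields a randomized NNV algorithm succeeding with probability $\ge 1/2$ and with expected query count bounded by the expected CE cost, which is exactly the claimed inequality between the two complexities.

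The main obstacle I anticipate is the bookkeeping around the edge labels being unbounded integers versus the CE utilities being confined to $[0,1]$: a clean statement of the reduction either needs to restrict NNV to bounded-weight instances or to allow the scaling factor $M$ to appear. For the purpose of proving the $2^{\Omega(n)}$ lower bound in Theorem~\ref{rand} this is harmless — the hard NNV instances the authors construct will surely have weights expressible in $\mathrm{poly}(n)$ bits, matching the ``utilities specified as $n$-bit numbers'' hypothesis of Theorem~B — but I would flag that the lemma should be stated for NNV instances whose labels fit in the relevant bit-budget so that the induced utilities are legitimately $n$-bit (or $\mathrm{poly}(n)$-bit) numbers in $[0,1]$. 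A secondary, milder point is verifying that $u_i$ as defined is genuinely a function $\{0,1\}^n \to [0,1]$ and consistent across the flipped edge — this is the short computation shown above and is the only thing that needs checking before the summation argument goes through verbatim.
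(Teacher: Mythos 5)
Your proposal is correct and follows essentially the same reduction as the paper: translate each CE query into the corresponding NNV query via utilities whose edge differences equal the (rescaled) labels, query the support (already charged to the CE cost), and sum the exact-CE regret inequalities to extract a vertex with $\sum_i R(v,v^{(i)}) \ge 0$. The only cosmetic difference is your utility choice $u_i(v)=\tfrac12+\tfrac{1}{4M}R(v,v^{(i)})$ versus the paper's one-sided assignment ($u_i(v)=R(v,v^{(i)})/m$, $u_i(v^{(i)})=0$ for positive labels), which changes nothing in the argument.
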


\begin{proof}
We convert a CE algorithm to an NPV one. Let $m=max_{u,v}R(u,v)$. Given an
NPV instance we build a CE instance ensuring that $%
u_{i}(v^{(i)})-u_{i}(v)=R(v,v^{(i)})/m$: for positive $R(v,v^{(i)})$ we set $%
u_{i}(v)=0$ and $u_{i}(v^{(i)})=R(v,v^{i)})/m$ , while for negative $%
R(v,v^{(i)})$ we set $u_{i}(v)=-R(v,v^{(i)})/m$ and $u_{i}(v^{(i)})=0$.
Under this mapping, when the CE algorithm makes a query $v$ it is
immediately translated to the same query $v$ of the NPV black box, and the
answer from the NPV black box directly provides the answer to the CE query.

Now take an equilibrium $x$ output by the CE algorithm. Summing up the
inequalities of the CE we get: $\sum_{i,b}Regret_{i\rightarrow
b}=\sum_{v}x(v)\sum_{i}(u_{i}(v^{(i)})-u_{i}(v))\leq 0$, which implies that
for some $v$ in the support of $x$ we have $\sum_{i}R(v,v^{(i)})=%
\sum_{i}(u_{i}(v^{(i)})-u_{i}(v))\leq 0$, as needed to provide an answer to
the NPV problem.
\end{proof}

\bigskip

We continue proving the lower bound for randomized algorithms that solve the
NPV problem by exhibiting a distribution over NPV instances such that every 
\emph{deterministic algorithm} requires exponentially many queries in order
to succeed on a non-negligible fraction of inputs drawn according to this
distribution; we appeal here to the so-called \textquotedblleft Yao (1977)
Principle," an instance of von Neumann's Minimax Theorem. The lower bound
for randomized CE algorithms follows, thus completing the proof of Theorem B.

\subsection{A Path Construction}

We build hard instances of the NPV problem from paths in the hypercube. Let $%
(v_{0},v_{1},...,v_{L})$ be a (not necessarily simple) path in the
hypercube; i.e., for each $0\leq j<L$, the vertex $v_{j+1}$ is obtained from 
$v_{j}$ by flipping a single random bit. The NPV instance we build from this
path essentially gives weight $-j$ to the edge $(v_{j-1},v_{j})$, with
weights added over the possible multiple times the path goes through a
single edge. This way every time the path passes a vertex $v$ at step $j$,
the incoming edge gets weight $-j$ while the outgoing edge gets weight $j+1$%
, adding one to the total net weight going out of $v$. Formally:

\begin{definition}
Let $(v_{0},v_{1},...,v_{L})$ be a (not necessarily simple) path in the
hypercube. The path induces the following labeling of the hypercube: $%
R(u,v)=\sum_{\{j|u=v_{j-1},v=v_{j}\}}j-\sum_{\{j|u=v_{j},v=v_{j-1}\}}j$.
\end{definition}

\begin{lemma}
For each $v\neq v_{L}$ we have $\sum_{i}R(v,v^{(i)})=|\{j|v_{j}=v\}|$. For $%
v_{L}$ we have $\sum_{i}R(v_{L},v_{L}^{(i)})=|\{j|v_{j}=v_{L}\}|-L$.
\end{lemma}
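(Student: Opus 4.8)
The plan is to prove the identity by a direct computation: substitute the definition of the induced labeling into $\sum_i R(v,v^{(i)})$ and simplify by telescoping. Distributing the outer sum over the two halves of the definition,
$$\sum_i R(v,v^{(i)}) = \sum_i \sum_{\{j | v_{j-1}=v,v_j=v^{(i)}\}} j \;-\; \sum_i \sum_{\{j | v_j=v,v_{j-1}=v^{(i)}\}} j ,$$
where $j$ ranges over the edge-indices $1,\dots,L$ of the path. The one genuinely useful observation is that consecutive path vertices $v_{j-1}$ and $v_j$ differ in exactly one coordinate, so for each $j$ with $v_{j-1}=v$ there is a \emph{unique} $i$ with $v_j=v^{(i)}$, and symmetrically for each $j$ with $v_j=v$ there is a unique $i$ with $v_{j-1}=v^{(i)}$. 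Hence summing over $i$ just erases the ``$v^{(i)}=\cdots$'' conditions, and
$$\sum_i R(v,v^{(i)}) = \sum_{\{j | 1\le j\le L,v_{j-1}=v\}} j \;-\; \sum_{\{j | 1\le j\le L,v_j=v\}} j .$$

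Next I would reindex the first sum by $k=j-1$, turning it into $\sum_{\{k | 0\le k\le L-1,v_k=v\}}(k+1)$, and cancel it against the second sum index by index. This is precisely the ``incoming weight $-j$, outgoing weight $j+1$'' bookkeeping noted just before the lemma: an index with $v_k=v$ contributes $k+1$ to the first sum and, when it also lies in the second sum's range, $k$ to the second, so each passage through $v$ leaves behind a net $\pm1$ (the sign fixed by the sign convention in the definition). What remains after the cancellation is $\pm$(the number of passages through $v$), corrected by the contributions of the indices that occur in only one of the two sums — and since the first sum omits $k=L$ while the second omits $k=0$, those are exactly the two ends of the path.

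The only delicate step, and really the content of the lemma, is this endpoint accounting. The index $k=0$ is not an exception: $v_0$ has no incoming edge, but $k=0$ still sits in the first sum and contributes $k+1=1$, the same net as an interior passage, so $v_0$ behaves like any ordinary vertex. The index $k=L$ is the exception: $v_L$ has no outgoing edge to absorb the weight $L$ of the final edge $(v_{L-1},v_L)$, so that weight survives the cancellation — this is exactly the extra additive $L$ in $\sum_i R(v_L,v_L^{(i)})=L-|\{j|v_j=v_L\}|$, and no analogous term appears for $v\ne v_L$. I expect keeping the index range $1,\dots,L$ straight in the two sums to be the only place one can slip, and I would close with the one-line sanity check that the displayed values sum to $0$ over all $v$, matching the antisymmetry $\sum_{u,w} R(u,w)=0$.
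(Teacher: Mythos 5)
Your telescoping computation is the same argument the paper gives (its one-line proof is exactly this per-passage bookkeeping), and your endpoint accounting -- $k=0$ acts like an interior visit, $k=L$ is the exception -- is right and in fact more careful than the paper's. The substantive problem is the sign, which you hedge rather than resolve. In your own displayed identity the net contribution of an interior passage at time $k$ is first sum minus second sum, i.e. $(k+1)-k=+1$, so with the definition of $R$ exactly as printed you derive $\sum_i R(v,v^{(i)})=+|\{j\,|\,v_j=v\}|$ for $v\ne v_L$, and at $v_L$ the surviving final-edge term is $R(v_L,v_{L-1})=-L$, not $+L$: altogether the exact negation of the lemma. Writing ``net $\pm1$, the sign fixed by the sign convention in the definition'' does not close this, because the printed convention fixes the sign to $+1$, contradicting the statement being proved. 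What is actually going on is a sign inconsistency in the paper itself: the displayed definition of the induced labeling (and the prose ``incoming $-j$, outgoing $j+1$, adding one'') conflicts with the lemma and with the paper's own proof line (``$+j$ incoming and $-(j+1)$ outgoing''), and it is the lemma's signs that the rest of the argument requires, since visited non-terminal vertices must have strictly negative total outgoing weight so that only $v_L$ answers the NNV problem. A complete write-up has to say this explicitly -- e.g. take the weight $j$ on the edge oriented backwards along the path, $R(v_j,v_{j-1})=+j$, and rerun your cancellation, which then yields $-1$ per passage and $+L$ at the end. As written, your argument establishes an identity with the opposite sign to the one claimed, and the hedge hides exactly the point that needed to be settled.

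A smaller remark: the sanity check you propose is a good one, but if you actually carry it out on the lemma as stated it fails -- the claimed values sum to $-1$, not $0$, over all $v$, because the arrival at time $L$ should not also be counted as a $-1$ passage at $v_L$ (the correct constant there is $L+1-|\{j\,|\,v_j=v_L\}|$, or equivalently one counts only the visits before time $L$). This off-by-one is harmless for the application, but since you advertise the check as your closing step, it is worth doing it and recording the correction rather than asserting that it passes.
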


\begin{proof}
Except for the last vertex, $v_{L}$, whenever the path goes into $v$ at step 
$j$ and exits it in step $j+1$, the total values of $R$'s going out of this
vertex increases by 1 ($-j$ incoming and $j+1$ outgoing). As for $v_{L},$ we
need to subtract $L$ due to the fact that the last edge goes in (and there
is no edge that goes out).
\end{proof}

\bigskip

This means that if the path covers the whole hypercube then the only
non-positive vertex in it is the end of the path.

We now define a random distribution over paths that cover the whole
hypercube and whose final end point is random. We start with some (fixed)
Hamiltonian path in the hypercube. From that point on we continue with a
random walk of length $L=n\cdot 2^{n/3}$. Why would it be hard for an
algorithm to find the end of the path? We show that an algorithm must
essentially follow the path query by query. Otherwise it is looking for a
needle of length $L=n\cdot 2^{n/3}$ that is randomly hidden in a haystack of
size $2^{n}$. But probing places that are already known to be in the random
part of the path only allows the algorithm to advance sequentially over it,
thus requiring exponential time to reach the end.

\begin{lemma}
\label{l:2}Any deterministic algorithm that runs in time $T<2^{n/3}/n$ is
able to solve the NPV problem on at most a fraction of $O(2^{-n/3})$ of
inputs drawn according to this distribution.
\end{lemma}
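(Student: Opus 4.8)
The plan is to set up an adversary/coupling argument showing that a deterministic algorithm making few queries essentially cannot distinguish the random part of the path from the rest of the hypercube, and hence cannot locate its endpoint. Recall the distribution: we fix a Hamiltonian path $P_0$ covering all $2^n$ vertices and then extend it by a random walk $W=(w_0,w_1,\ldots,w_L)$ of length $L=n\cdot 2^{n/3}$, where $w_0$ is the endpoint of $P_0$ and each step flips a uniformly random bit. By the lemma in the previous subsection, every vertex except the very last one $v_L=w_L$ has $\sum_i R(v,v^{(i)})<0$, so the \emph{only} correct output is $v_L$; solving NNV here is exactly ``find $w_L$''. The key observation is that the answer to a query $v$ reveals, via the edge labels $R(v,v^{(i)})$, precisely the multiset of step-indices $j$ at which the path entered or left $v$ — in particular it reveals whether $v$ lies on the random-walk portion and, if so, the set of times the walk visited it. I would first make this explicit: from the answer at $v$ the algorithm learns $J(v)=\{j : w_j=v\}$ (together with the Hamiltonian-path indices, which are fixed and known in advance and carry no information).

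Next I would formalize what the algorithm can possibly learn. Define the \emph{frontier} after $t$ queries to be $M_t = \max\{j : w_j \in Q_t\}$ where $Q_t$ is the set of queried vertices (taking $M_t$ to be the largest random-walk index seen so far; if no random-walk vertex has been queried, $M_t$ is just the length of the Hamiltonian prefix). The crucial claim is a martingale-type statement: conditioned on everything revealed by the first $t$ queries, the continuation $(w_{M_t},w_{M_t+1},\ldots,w_L)$ is still distributed as a fresh random walk of length $L-M_t$ started at the known vertex $w_{M_t}$ — the algorithm has gained no information about where the walk goes after time $M_t$ beyond its current position. I would prove this by induction on $t$: a new query $v$ either (a) hits a vertex the walk visits only at times $\le M_t$, revealing nothing new about the future, or (b) hits a vertex first visited at some time $j>M_t$, in which case the new information is exactly the sub-path $(w_{M_t},\ldots,w_j)$, and we update the frontier to $M_t' \ge j$; by the Markov property the tail from $w_{M_t'}$ is again a fresh walk. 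The point is that each query can only advance the frontier, and I want to bound how much.

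Then I would bound the frontier advance per query. A single query $v$ advances the frontier past its current value only if the walk, run from its current position $w_{M_t}$, ever reaches $v$; I would show that with probability at least $1-O(2^{-n/3})$ per query the walk started at a fixed vertex does not hit any \emph{prescribed} vertex within the next, say, $2^{n/3}$ steps — intuitively, a $2^n$-vertex hypercube random walk run for $2^{n/3}$ steps hits any fixed target only with tiny probability (one can get a crude bound from the fact that in $2^{n/3}$ steps the walk visits at most $2^{n/3}$ distinct vertices, and by symmetry/mixing the chance a fixed target is among them is $O(2^{n/3}/2^n)=O(2^{-2n/3})$; even a weaker per-step bound suffices after a union bound over $L$ steps). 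The right way to organize this, to avoid re-mixing subtleties, is: by the coupling above, over the whole run of $T$ queries the total frontier advance is dominated by a sum of $T$ increments, each increment being the hitting time contribution, and with probability $1-T\cdot O(2^{-n/3}) = 1-O(2^{-n/3})$ none of these increments is large; hence after all $T < 2^{n/3}/n$ queries the frontier satisfies $M_T < L = n\cdot 2^{n/3}$, so $w_L$ has never been queried and indeed nothing about $w_L$ has been revealed beyond it being a uniformly-random continuation, so the algorithm outputs $w_L$ with probability $O(2^{-n/3})$. Conditioning on no ``bad'' query occurring, the algorithm's output is independent of $w_L$ which is spread over exponentially many possibilities, giving the claimed $O(2^{-n/3})$ success fraction.

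The main obstacle I expect is making the per-query hitting-probability bound rigorous in the presence of adaptivity: the queried vertex $v$ is chosen by the algorithm as a function of past answers, which are correlated with the walk. The coupling/martingale framing is precisely what resolves this — once we know the tail $(w_{M_t},\ldots,w_L)$ is a fresh walk from a \emph{known} point and is independent of the algorithm's state, the next query $v$ is a fixed target from the walk's point of view, and a clean hitting-time estimate applies. A secondary technical point is handling multiple visits and the Hamiltonian prefix cleanly in the definition of $J(v)$ and the frontier, but these are bookkeeping. So the heart of the argument is the inductive ``fresh-tail'' lemma; everything else is a union bound and a counting argument over the support of $w_L$.
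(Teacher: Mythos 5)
Your overall strategy (the answer is forced to be the walk's endpoint, the algorithm can only learn the path by ``following'' it, and mixing kills any attempt to jump ahead) is the same as the paper's, but two of your concrete steps have genuine gaps. First, the per-query estimate you rely on --- ``with probability $1-O(2^{-n/3})$ the walk from its current position does not hit a prescribed vertex within the next $2^{n/3}$ steps'' --- is false precisely in the case that matters: if the algorithm queries a \emph{neighbor} of the current frontier vertex $w_{M_t}$, the walk hits it in one step with probability $1/n$, so the frontier advances with probability $\Omega(1/n)$ per query, not $O(2^{-n/3})$. An algorithm really can crawl along the path this way; the issue is not whether the frontier advances but by how much. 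Your fallback bookkeeping does not close this either: capping each of the $T<2^{n/3}/n$ increments at $2^{n/3}$ only gives a total advance of order $2^{2n/3}/n$, which is far \emph{larger} than $L=n2^{n/3}$, so you cannot conclude $M_T<L$. You would need either a bound on the \emph{expected} advance per query plus Markov's inequality, or the paper's device: reveal the next $n^2$ path vertices to the algorithm for free after every query, so that crawling is given away gratis and the algorithm only ``wins'' by hitting a vertex at walk-distance at least $n^2$ beyond everything revealed --- which, since the hypercube mixes in $O(n\log n)<n^2$ steps, is an almost uniform target hit with probability $(1+o(1))2^{-n}$; a union bound over the $T$ queries and the $L$ path indices then gives $O(2^{-n/3})$.

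Second, your inductive ``fresh-tail'' claim is not exactly true as stated: an NNV answer at a queried vertex $v$ reveals \emph{all} times the walk visits $v$, including future ones, so a query that does not advance the frontier still reveals that the tail avoids $v$ (and a query that does advance it reveals visit times and local edge directions, not ``exactly the sub-path''). Conditioned on this negative information the tail is a walk conditioned to avoid the queried set, not a fresh walk. This is exactly the adaptivity subtlety the paper's Hit-The-Path formulation is built to sidestep: there one bounds, for each fixed pair (step $t$, path index $j$), the unconditional probability that the query $q_t$ --- a deterministic function of the revealed prefix $(v_1,\dots,v_{(t-1)n^2})$ alone --- equals $v_j$, and sums over $t$ and $j$; no conditioning on past misses is ever needed. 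Your coupling idea can likely be repaired along these lines (reveal more for free, and union-bound over the first win rather than conditioning), but as written both the hitting estimate and the accounting that is supposed to keep $M_T<L$ are incorrect, so the proof does not go through.
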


Since all the information in our path-based instances of the NPV problem are
determined by the path, we can imagine that the queries directly ask for
this path information. Since the only non-positive vertex in these instances
is the end of the path, our algorithm really needs to find it. Not only is
this hard to do, but it is hard even to find any vertex that is near the
tail of the path. This is so since, on the one hand, the tail is a tiny
fraction of the hypercube and so it can't be found \textquotedblleft at
random,\textquotedblright\ and, on the other hand, the only possible
\textquotedblleft deliberate\textquotedblright\ way to find it is to follow
the path step by step, which takes exponential time. The reason that no
\textquotedblleft shortcuts\textquotedblright\ are possible when following
the path is that the random walk in the hypercube mixes rapidly, and so one
gets no information about how the path continues beyond the very near
vicinity. To formalize this line of reasoning we introduce a variant of the
problem that explicitly provides the algorithm with any information that we
think it may get a handle on. Specifically, we tell the algorithm everything
about the path except for its tail, and, furthermore, give it an additional $%
n^{2}$ vertices at the beginning of this tail every step. Once this is given
to the algorithm, we are able to show that the algorithm can never learn
anything new.

\subsection{\textbf{The \textquotedblleft Hit The Path\textquotedblright\
(HTP) problem}}

We formally define the HTP problem.

\bigskip

\noindent \textbf{The \textquotedblleft Hit The Path\textquotedblright\
(HTP) problem:}

\begin{itemize}
\item \textbf{Input:} A random path $(v_{0},...,v_{L})$ in the hypercube of
length $L=n2^{n/3}$, starting from a \emph{revealed} vertex $v_{0}$.

\item \textbf{Queries:} At each step $t=1,...,T$:

\begin{enumerate}
\item The algorithm may query a vertex $q_{t}$ of the hypercube, depending
on the revealed information so far (which we will soon see is exactly the
sequence $(v_{0},...,v_{(t-1)n^{2}})$).

\item The next $n^{2}$ vertices of the path, i.e., $%
v_{(t-1)n^{2}+1},...,v_{tn^{2}}$, are revealed (independently of the query).
\end{enumerate}

\item \textbf{Output: }The algorithm wins at time $t$ if $q_{t}$ is on the 
\emph{non-previously revealed} part of the path, i.e., $q_{t}\in
\{v_{tn^{2}+1},...,v_{L}\}$.
\end{itemize}

\begin{lemma}
\label{l:new}Every algorithm for the HTP problem with at most $T$ queries
wins with probability at most $(1+o(1))n2^{-2n/3}T.$
\end{lemma}

\begin{proof}
Fix a (deterministic) algorithm for the HTP problem that makes at most $T$
queries. If it wins, then for some step $1\leq t\leq T$ it won (for the
first time) by finding an unrevealed vertex $v_{j}$ for $tn^{2}<j\leq L$. We
bound this probability (over the random choice of the path) for a fixed $t$
and $j$, and then use the union bound to obtain an upper bound on the
probability that the algorithm wins. Now let us look at the $t$'th query $%
q_{t}$ made by the algorithm. If none of the previous queries won, then the
only information the algorithm had when making this query was the revealed
part of the path, i.e., $(v_{0},v_{1},...,v_{(t-1)n^{2}})$, and so the query
is just a function of these: $q_{t}=q_{t}(v_{0},v_{1},...,v_{(t-1)n^{2}})$.
What is the probability that for some function on these inputs we have $%
q_{t}(v_{0},v_{1},...,v_{(t-1)n^{2}})=v_{j}$? Note that our construction of
random paths means that $v_{j}$ is obtained by taking a random walk of
length $j-(t-1)n^{2}\geq n^{2}$ from vertex $v_{(t-1)n^{2}}$. Now comes the
crucial observation: as the mixing time of the hypercube is known to be $%
O(n\log n)<n^{2}$ (cf. Diaconis et al. 1990), this means that a random walk
of length $l\geq n^{2}$ ends at an almost uniformly random vertex of the
hypercube. Thus for any fixed $(v_{0},v_{1},...,v_{(t-1)n^{2}})$, we have
that $v_{j}$ is almost uniformly distributed over the hypercube and so $%
Pr[q_{t}(v_{0},v_{1},...,v_{(t-1)n^{2}})=v_{j}]=(1+o(1))2^{-n}$. Multiplying
this quantity by $T<2^{n/3}/n$ (for all possible values of $t$) and then by $%
2^{n/3}n$ (for all possible values of $j$), we get the required upper bound
for the probability of winning.
\end{proof}

\bigskip

\begin{proof}[Proof of Lemma \protect\ref{l:2}]
First, note that Lemma \ref{l:new} implies in particular that an
HTP-algorithm with $T<2^{n/3}/n$ queries can win with probability at most $%
O(2^{-n/3})$.

Second, recalling the discussion immediately following the statement of
Lemma \ref{l:2}, suppose that we have an algorithm that succeeds in solving
the NPV problem on a larger fraction of inputs drawn according to this
distribution; we use it to win instances of this HTP problem with at least
the same probability. Whenever the NPV algorithm makes a query to $v$ we
make the same query in the HTP case. If we win, then we are done. Otherwise,
we know that the non-revealed part of the path does not pass through $v$ and
so the reply to the NPV query is completely determined by the revealed part
of the path, which we already have and can use for the reply. If the NPV
algorithm succeeds then it must have found the last vertex on the path (the
only non-positive one), which is on the path and is revealed only after $%
L/n^{2}=2^{n/3}/n>T$ queries, and so is still unrevealed and thus our HTP
algorithm wins too.
\end{proof}

\bigskip

Combining Lemmas \ref{l:npv} and \ref{l:2} proves Theorem B.

\end{document}